\documentclass[11pt]{article}
\usepackage[utf8]{inputenc}
\usepackage{hyperref}
\usepackage{amsmath}
\usepackage{amsthm}
\usepackage{amssymb}
\usepackage{authblk}
\usepackage{cite}
\usepackage{lipsum}
\usepackage{geometry}
\usepackage{enumerate}
\usepackage{array}
\usepackage{multicol}
\usepackage{multirow}
\usepackage{xcolor}
\textwidth=6.6 in
\textheight=8.5 in
\oddsidemargin=0.10in
\evensidemargin=0.10in
\topmargin=0.0in
\headsep=18pt
\headheight=12pt
\newtheorem{theorem}{Theorem}[section]
\numberwithin{theorem}{section}
\newtheorem{lemma}[theorem]{Lemma}
\newtheorem{defi}[theorem]{Definition}
\newtheorem{corollary}[theorem]{Corollary}
\newtheorem{prop}[theorem]{Proposition}
\newtheorem{remark}[theorem]{Remark}

\DeclareMathOperator{\tr}{Tr}

\newcommand{\Max}{\displaystyle \max}

\newcommand{\Frac}{\displaystyle \frac}
\newcommand{\F}{\mathbb F}
\newcommand{\Fbn}{\mathbb{F}_{2^n}}

\newcommand{\Fpn}{\mathbb{F}_{p^n}}

\newcommand{\Fbnmul}{\mathbb{F}_{2^n}^*}
\newcommand{\Fpnmul}{\mathbb{F}_{p^n}^*}

\begin{document}

    \title{The second-order zero differential uniformity of the swapped inverse functions over finite fields}
    \author{Jaeseong Jeong$^1$, Namhun Koo$^2$, Soonhak Kwon$^3$\\
        \small{\texttt{ Email: wotjd012321@naver.com, nhkoo@ewha.ac.kr, shkwon@skku.edu}}\\
\small{$^1$Department of Innovation Center for Industrial Mathematics, National Institute for Mathematical Sciences, Seongnam, Republic of Korea}\\
\small{$^2$Institute of Mathematical Sciences, Ewha Womans University, Seoul, Republic of Korea}\\
\small{$^3$Department of Mathematics, Sungkyunkwan University, Suwon, Republic of Korea}\\
    }

\maketitle
\begin{abstract}
The Feistel Boomerang Connectivity Table (FBCT) was proposed as the feistel counterpart of the Boomerang Connectivity Table. The entries of the FBCT are actually related to the second-order zero differential spectrum. Recently, several results on the second-order zero differential uniformity of some functions were introduced. However, almost all of them were focused on power functions, and there are only few results on non-power functions. In this paper, we investigate the second-order zero differential uniformity of the swapped inverse functions, which are functions obtained from swapping two points in the inverse function. We also present the second-order zero differential spectrum of the swapped inverse functions for certain cases. In particular, this paper is the first result to characterize classes of non-power functions with the second-order zero differential uniformity equal to $4$, in even characteristic.

\bigskip
\noindent \textbf{Keywords.} Second-order zero differential uniformity, Swapped inverse function, Permutation

\bigskip
\noindent \textbf{Mathematics Subject Classification(2020)} 94A60, 06E30
\end{abstract}

\section{Introduction}

Throughout this paper, let 
\begin{itemize} 
\item $\Fpn$ be the finite field of $p^n$ elements and $\Fpnmul=\Fpn \setminus \{0\}$ be the multiplicative group,
where $p$ is a prime.
\item $Inv(x) = x^{p^n-2}$ be the multiplicative inverse function on $\Fpn$.
\end{itemize}

The substitution box(S-box) plays an important role in the security of block ciphers, since it is the only nonlinear part in block ciphers. Hence, vectorial Boolean functions to be used as S-boxes in block ciphers should have good cryptographic properties. For example, S-boxes need to have low differential uniformity, where the differential uniformity\cite{Nyb94} is a cryptographic parameter to measure the resistance against differential attack, which is one of the most well-known attacks on block ciphers. The definition of the differential uniformity is as follows.
\begin{defi}Let $f$ be a function over $\Fpn$ and $a,b\in\Fpn$. Then we define $\Delta_f(a,b)$ by the number of solutions of 
$f(x+a)-f(x)=b$. The differential uniformity of $f$ is defined by
\begin{equation*}
\Delta_f=\Max_{a\in\Fpnmul, b\in \Fpn} \Delta_f(a,b).
\end{equation*}
Furthermore, we say that $f$ is second-order zero differentially $\Delta_f$-uniform. We say $f$ is Perfect Nonlinear (PN) if $\Delta_f=1$, and $f$ is Almost Perfect Nonlinear (APN) if $\Delta_f=2$. 
\end{defi}

The boomerang attack, an extension of the differential attack, was proposed by Wagner\cite{Wag99}. At Eurocrypt 2018, Cid et al. \cite{CHP+18} introduced the Boomerang Connectivity Table (BCT), as a tool to analyze the boomerang attack. The BCT only addressed the case of Substitution Permutation Networks, and does not cover the case of block ciphers with feistel structures. Boukerrou et al.\cite{BHL+20} proposed the Feistel Boomerang Connectivity Table (FBCT), the feistel counterpart of the BCT. They also introduced the Feistel Boomerang Uniformity, which is defined by the maximum value in the FBCT except the first row, the first column, and diagonal part. In \cite{LYT22}, Li et al. observed that the entries of the FBCT are actually related to the second-order zero differential spectra of functions. The definition of the second-order zero differential uniformity is given as follows.

\begin{defi} Let $f$ be a function over $\Fpn$ and $a,b\in\Fpn$. Then we define $\nabla_f(a,b)$ by the number of solutions of 
\begin{equation}\label{FBU_eqn}
 f(x+a+b)-f(x+a)-f(x+b)+f(x)=0.
\end{equation}
We also define the second-order zero differential uniformity by
\begin{equation*}
\nabla_f=
\begin{cases}
\Max \{\nabla_f(a,b) : a,b\in\Fbnmul, a\ne b \} &\text{ if }p=2,\\
\Max \{\nabla_f(a,b) : a,b\in\Fpnmul \} &\text{ if }p>2.
\end{cases}
\end{equation*}
Furthermore, we say that $f$ is second-order zero differentially $\nabla_f$-uniform. We denote 
$$\omega_i=\#\{(a,b)\in\Fpn\times\Fpn : \nabla_f(a,b)=i\},$$
for $i\le \nabla_f$. Then, the second-order zero differential spectrum of $f$ by
$$DS_f = \{\omega_i\ :\ i\le \nabla_f \}.$$
\end{defi}

Recently, several researchers studied functions with low second-order zero differential uniformity\cite{EM22,GHRS23a,GHRS23b,LYT22,MML+23,MLXZ23}. The majority of those functions are power functions, and there are only few results on the second-order zero differential uniformity of non-power functions.
Table \ref{table_nonpower} describes all known non-power functions for which the second-order zero differential uniformity is known.

\begin{table}
\begin{center}
\begin{tabular}{|c|c|c|c|c|}
\hline $F(x)$ & $p$ & Conditions & $\nabla_f$ & Ref. \\
\hline $x^{2^n-2} + \tr \left(\frac{x^2}{x+1}\right)$ & $2$ & $n$ : even & $8$ & \multirow{2}{*}{\cite{GHRS23a}}\\
\cline{1-4} $1/\left(x+\gamma \tr(x^{2^k+1})\right)$ & $2$ & $k<n$, $\gamma \in \Fbn \cap \F_{2^{2k}}^*$, $\tr(\gamma^{2k+1})=0$ & $\le 8$ & \\
\hline \multirow{2}{*}{$x^{p^n-1}+ux^2$} & $p=3$ & \multirow{2}{*}{$u\ne 0$} & $2$ & \multirow{2}{*}{\cite{GHRS23b}}\\
\cline{2-2} \cline{4-4} & $p>3$ &  & $4$ & \\
\hline
\end{tabular}
\caption{Known non-power functions studied for their second-order zero differential uniformity}\label{table_nonpower}
\end{center}
\end{table}

For a transposition $(\alpha,\beta)$ on $\Fpn$ where $\alpha, \beta \in \Fpn$ with $\alpha \ne \beta$, we call functions of the form $Inv \circ (\alpha, \beta)$ by the \emph{swapped inverse functions}. Many researchers were interested in cryptographic properties of the swapped inverse functions, and it is known that the swapped inverse functions have good cryptographic properties, for example, low differential uniformity\cite{LWY13e,JKK23}, low $c$-differential uniformity\cite{Sta21,JKK23}, low boomerang uniformity\cite{LQSL19,CV20}, low $c$-boomerang uniformity\cite{Sta21}, low differential-linear uniformity\cite{JKK22} and high nonlinearity\cite{LWY13e}. In this paper, we study the second-order zero differential uniformity of the swapped inverse functions.
As observed in \cite{JKK23}, if $\alpha \ne 0$, then the swapped inverse function $F=Inv \circ (\alpha, \beta)$ is linear equivalent to  $$Inv \circ (1,\alpha^{-1}\beta) = L_\alpha \circ F \circ L_\alpha,$$ 
where $L_{\alpha}(x) =\alpha x$. Since the second-order zero differential uniformity is invariant under extended-affine equivalence\cite{BHL+20}, $F$ and $Inv\circ (1,\alpha^{-1}\beta)$ have the same second-order zero differential uniformity. Hence, in this paper, we investigate the second order zero differential uniformity of $Inv \circ (1,\gamma)$, where $\gamma \in \Fpn \setminus\{1\}$. 

We divided our investigation into two sections: Section \ref{sec_gam0} for the case $\gamma=0$, and Section \ref{sec_gamnot0} for the case $\gamma\ne0$.  Especially when $p=2$, we study the second-order zero differential spectrum of $Inv \circ (1,\gamma)$. For $p$ odd, we find an upper bound of the second-order zero differential uniformity of $Inv\circ (0,1)$. We could not completely determine the second-order zero differential uniformity of $Inv\circ (1,\gamma)$ for $\gamma \ne 0$, but we give several experimental results via SageMath and state some conjectures on second-order zero differential uniformity of $Inv \circ (1,\gamma)$ when $p=3$. 
This case ($\gamma\ne 0$) is notably extensive, posing many challenges for a complete analysis. Similar difficulties have already been observed in the evaluation of other uniformity invariants for this general case $\gamma \ne 0$. For example, a research\cite{CV20} for the boomerang uniformity of $Inv \circ (1,\gamma)$ with $\gamma\ne 0$ required 8 pages to be addressed in its preprint version; however, this was ultimately omitted from the journal publication version. Furthermore, to the best of our knowledge, there are no known results concerning the $c$-differential uniformity or $c$-boomerang uniformity for this general case.

The rest of this paper are organized as follows. In Section \ref{sec_pre}, we give some preliminaries.  In Section \ref{sec_gam0}, we investigate the second-order zero differential uniformity of $Inv \circ (0,1)$. In Section \ref{sec_gamnot0}, we study the second-order zero differential uniformity of $Inv \circ (1,\gamma)$ when $\gamma \ne 0$. Finally, we give the concluding remark in Section \ref{sec_con}.

\section{Preliminaries}\label{sec_pre}

In the following, we give some properties of the FBCT when $p=2$. 

\begin{prop}(\cite{BHL+20})\label{FBU_prop_even}
Let $f$ be a function on $\Fbn$ and $a,b\in\Fbn$. Then,
\begin{enumerate}
\item $\nabla_f(a,b)=\nabla_f(b,a)$.
\item $\nabla_f(a,b)=2^n$ when $ab(a+b)=0$.
\item $\nabla_f(a,b)\equiv 0\pmod{4}$.
\item $\nabla_f(a,b)=\nabla_f(a,a+b)$.
\item $f$ is APN if and only if $\nabla_f(a,b)=0$ for all $a,b\in \Fbn$ with $ab(a+b)\ne 0$.
\end{enumerate}
\end{prop}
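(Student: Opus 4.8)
The plan is to reformulate everything in terms of the second-order derivative
$$B_f(x,a,b) := f(x+a+b) + f(x+a) + f(x+b) + f(x),$$
so that $\nabla_f(a,b)$ is the number of $x\in\Fbn$ solving $B_f(x,a,b)=0$. Items (1) and (4) are then purely formal. The expression \eqref{FBU_eqn} is visibly invariant under swapping $a$ and $b$, which gives (1); and replacing $b$ by $a+b$ leaves the four-element multiset $\{x,\,x+a,\,x+b,\,x+a+b\}$ unchanged because $a+(a+b)=b$ in characteristic $2$, so in fact $B_f(x,a,a+b)=B_f(x,a,b)$ for every $x$, which is stronger than (4). For (2), if $a=0$, $b=0$, or $a=b$ then the four summands of $B_f(x,a,b)$ cancel in pairs for all $x$, so the equation holds identically and $\nabla_f(a,b)=2^n$.

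For (3) the key point is that the solution set $V=\{x : B_f(x,a,b)=0\}$ is closed under translation by $a$, by $b$, and by $a+b$: a one-line computation shows $B_f(x+a,a,b)=B_f(x,a,b)$ (the summands are merely permuted), and likewise for $b$ and $a+b$. Hence $V$ is a union of cosets of the additive subgroup $W=\{0,a,b,a+b\}$. When $ab(a+b)\neq 0$ the elements $a,b$ are linearly independent over $\F_2$, so $|W|=4$ and therefore $4\mid |V|=\nabla_f(a,b)$; the degenerate case is handled by (2).

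Item (5) carries the actual content. Writing $D_a f(x)=f(x+a)+f(x)$, one has $B_f(x,a,b)=D_a f(x+b)+D_a f(x)$, so $\nabla_f(a,b)$ counts the $x$ with $D_a f(x+b)=D_a f(x)$. If $f$ is APN and $a\neq 0$, then $D_a f$ is $2$-to-$1$; since $D_a f(x+a)=D_a f(x)$ always holds, each fibre of $D_a f$ is exactly a coset $\{x,x+a\}$, so $D_a f(x+b)=D_a f(x)$ forces $b\in\{0,a\}$ — hence $ab(a+b)\neq 0$ implies $\nabla_f(a,b)=0$. Conversely, if $f$ is not APN there is some $a\neq 0$ and a value with at least four preimages under $D_a f$, say $x_0,x_0+a,x_1,x_1+a$ with $x_1\notin\{x_0,x_0+a\}$; then $b:=x_0+x_1$ satisfies $b\notin\{0,a\}$, i.e. $ab(a+b)\neq 0$, while $x_0\in V$, so $\nabla_f(a,b)\geq 1$, a contradiction. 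The only step needing a little care is this converse — controlling the fibre sizes of $D_a f$ and extracting the witness $b$; everything else is direct substitution.
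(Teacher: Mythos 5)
Your proof is correct and complete; note that the paper itself offers no proof of this proposition, citing it to Boukerrou et al.\ [BHL+20], so there is no in-paper argument to compare against. Your route is the natural one and all steps check out: (1), (2) and (4) are indeed immediate from the symmetry of $f(x+a+b)+f(x+a)+f(x+b)+f(x)$ and pairwise cancellation in the degenerate cases; for (3), the observation that the solution set is invariant under translation by the $\F_2$-subspace $\{0,a,b,a+b\}$, which has order $4$ exactly when $ab(a+b)\ne 0$, gives the divisibility (and in the degenerate case $2^n\equiv 0 \pmod 4$ for $n\ge 2$, which is the only regime of interest). For (5), your rewriting $B_f(x,a,b)=D_af(x+b)+D_af(x)$ with $D_af(x)=f(x+a)+f(x)$ is the key identity: APN forces every nonempty fibre of $D_af$ to be exactly a coset $\{x,x+a\}$, so $b\notin\{0,a\}$ yields no solutions; and in the converse you correctly use that fibres of $D_af$ have even size, so failure of APN produces a fibre of size at least $4$, and the witness $b=x_0+x_1$ satisfies $ab(a+b)\ne 0$ while giving $\nabla_f(a,b)\ge 1$. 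This is essentially the standard argument from the cited source, and your write-up supplies the verification the paper omits.
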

By the third term of the above proposition, we see that $\nabla_f(a,b)\equiv 0 \pmod{4}$ and hence $\omega_i=0$ for $i\not\equiv 0\pmod{4}$. The last term of the above proposition means that $f$ is APN if and only if $f$ is second-order zero differentially $0$-uniform. Hence, in studies of the second-order zero differential uniformity, only functions that are not APN have been considered, and the minimal possible second-order zero differential uniformity is $4$, in this case. In this paper, we completely characterize second-order zero differentially $4$-uniform swapped inverse functions.

Next, we give some properties of the FBCT when $p$ is an odd prime. Note that some of the above proposition does not hold in this case.

\begin{prop}(\cite{LYT22})
Let $p$ be an odd prime and $f$ be a function on $\Fpn$ and $a,b\in\Fpn$. Then,
\begin{enumerate}
\item $\nabla_f(a,b)=\nabla_f(b,a)$.
\item $\nabla_f(a,b)=p^n$ when $ab=0$.
\item $f$ is PN if and only if $\nabla_f(a,b)=0$ for all $a,b\in \Fpn$ with $ab\ne 0$.
\end{enumerate}
\end{prop}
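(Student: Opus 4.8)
The plan is to treat the three parts in turn, each of which reduces to an elementary manipulation of the second-order difference operator. Write $D_a f(x) = f(x+a) - f(x)$ for the first-order derivative in direction $a$, and observe that \eqref{FBU_eqn} is exactly the equation $D_b D_a f(x) = 0$, since
$$D_b D_a f(x) = D_a f(x+b) - D_a f(x) = f(x+a+b) - f(x+a) - f(x+b) + f(x).$$

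For part 1, the expression $f(x+a+b) - f(x+a) - f(x+b) + f(x)$ is manifestly unchanged when $a$ and $b$ are swapped (only the two middle summands are exchanged), so the set of $x$ defining $\nabla_f(a,b)$ coincides with the set defining $\nabla_f(b,a)$; in particular the cardinalities agree. For part 2, if $a=0$ then $D_a f \equiv 0$, hence $D_b D_a f(x)=0$ holds for every $x\in\Fpn$ and $\nabla_f(a,b)=p^n$; the case $b=0$ follows from part 1. It is worth noting explicitly that, unlike the characteristic-two case, one cannot append the condition $a=b$ here, because $D_aD_af$ need not vanish when $\mathrm{char}\neq 2$ — this is precisely why the odd-characteristic proposition lists fewer properties.

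For part 3, the key reformulation is that $f$ is PN if and only if for every $a\in\Fpnmul$ the map $D_a f\colon \Fpn\to\Fpn$ is a bijection, equivalently (finiteness) injective. If $f$ is PN and $ab\neq 0$, then $D_bD_af(x)=0$ says $D_a f(x+b)=D_a f(x)$, which by injectivity of $D_a f$ forces $b=0$, a contradiction; hence no solution exists and $\nabla_f(a,b)=0$. Conversely, assume $\nabla_f(a,b)=0$ whenever $ab\neq 0$, fix $a\in\Fpnmul$, and suppose $D_a f(x_1)=D_a f(x_2)$ for distinct $x_1,x_2$. Put $b=x_2-x_1\neq 0$; then $D_a f(x_1+b)-D_a f(x_1)=D_a f(x_2)-D_a f(x_1)=0$, so $x_1$ solves \eqref{FBU_eqn} for this pair $(a,b)$, contradicting $\nabla_f(a,b)=0$. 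Therefore every nonzero derivative of $f$ is injective, i.e.\ $f$ is PN.

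I do not anticipate a real obstacle: the proposition is a bundle of bookkeeping identities about the operator $D_bD_a$. The only point that merits care is the equivalence ``$f$ PN $\iff$ all nonzero first derivatives are bijections,'' and, in the converse half of part 3, choosing the auxiliary direction $b=x_2-x_1$ so that the hypothesized collision produces an honest solution of \eqref{FBU_eqn} witnessing $\nabla_f(a,b)\geq 1$.
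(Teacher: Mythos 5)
Your argument is correct: the symmetry in part 1, the vanishing of $D_bD_af$ when $a=0$ in part 2, and the equivalence ``$f$ is PN iff every $D_af$ with $a\neq 0$ is a bijection,'' together with the collision-to-solution trick $b=x_2-x_1$ in the converse of part 3, constitute a complete proof. Note, however, that the paper itself offers no proof to compare against — it states this proposition as a known result quoted from \cite{LYT22} in the preliminaries — so your write-up is simply the standard self-contained verification of those cited facts, and your aside explaining why the diagonal case $a=b$ cannot be added in odd characteristic (in contrast with Proposition 2.1 for $p=2$) is an accurate observation consistent with how the paper treats $\nabla_f(a,a)$ later on.
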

Similarly with the case of characteristic $2$, only functions that are not PN have been considered in this research. Unlike the binary case, there are second-order zero differentially $1$-uniform functions\cite{GHRS23a,LYT22}. In this paper we show that the second-zero differential uniformity of $Inv \circ (0,1)$ is at most $4$ for odd characteristic case.

We can see that if $x$ is a solution of \eqref{FBU_eqn}, then $x+a$ is a solution of $f(x-a+b)-f(x-a)-f(x+b)+f(x)=0$, and $x+b$ is a solution of $f(x+a-b)-f(x+a)-f(x-b)+f(x)=0$. So, (especially in odd characteristic) we have the following.
\begin{equation}\label{-a-b_eqn}
\nabla_f(a,b)=\nabla_f(-a,b)=\nabla_f(a,-b)=\nabla_f(-a,-b).
\end{equation}

Next we recall that the FBCT of $Inv$.

\begin{prop}(\cite{EM22}) Let $p=2$. Then,
\begin{equation*}
\nabla_{Inv}(a,b)=
\begin{cases}
0, &\text{ if }ab(a+b)\ne 0, \frac a b\not\in \F_4 \setminus \F_2,\\
4, &\text{ if }ab(a+b)\ne 0, \frac a b \in \F_4 \setminus \F_2,\\
2^n, &\text{ if }ab(a+b)= 0.
\end{cases}
\end{equation*}

\end{prop}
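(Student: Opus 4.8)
The plan is to count directly the solutions $x\in\Fbn$ of the defining equation
\[
Inv(x+a+b)+Inv(x+a)+Inv(x+b)+Inv(x)=0,
\]
keeping in mind that in characteristic $2$ subtraction coincides with addition and that $Inv(y)=1/y$ for $y\ne 0$ while $Inv(0)=0$. The degenerate case $ab(a+b)=0$ is handled instantly: by item~2 of Proposition~\ref{FBU_prop_even} (or simply because the left-hand side collapses identically when $a=0$, and symmetrically when $b=0$ or $a=b$), every $x$ is a solution, so $\nabla_{Inv}(a,b)=2^n$. Hence I concentrate on the case $ab(a+b)\ne 0$, which forces $0,a,b,a+b$ to be four pairwise distinct elements of $\Fbn$.

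The key step is to split the count according to whether $x\in\{0,a,b,a+b\}$ or not. If $x\notin\{0,a,b,a+b\}$, then all four arguments $x,x+a,x+b,x+a+b$ are nonzero, so the equation becomes $\frac1x+\frac1{x+a}+\frac1{x+b}+\frac1{x+a+b}=0$. Grouping the first two and the last two terms and using $(x+a)+x=a=(x+a+b)+(x+b)$, this rewrites as $\frac{a}{x(x+a)}=\frac{a}{(x+b)(x+a+b)}$, hence (since $a\ne 0$) $x(x+a)=(x+b)(x+a+b)$. Expanding both sides and cancelling the common part $x^2+ax$ leaves $b(a+b)=0$, impossible under $ab(a+b)\ne 0$. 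So there is no solution outside $\{0,a,b,a+b\}$, and the whole problem reduces to testing those four boundary points.

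For the boundary points, substituting $x=0$ gives $\frac1{a+b}+\frac1a+\frac1b=0$, which after clearing denominators is equivalent to $a^2+ab+b^2=0$; the same condition comes out for $x=a$, $x=b$, $x=a+b$, either by repeating the one-line computation or by invoking the symmetries $\nabla_{Inv}(a,b)=\nabla_{Inv}(b,a)=\nabla_{Inv}(a,a+b)$ from Proposition~\ref{FBU_prop_even}. Setting $t=a/b$ (legitimate since $b\ne 0$), the condition reads $t^2+t+1=0$, i.e. $t$ is a primitive cube root of unity, equivalently $t\in\F_4\setminus\F_2$. Therefore: when $a/b\in\F_4\setminus\F_2$ all four of $0,a,b,a+b$ are solutions and $\nabla_{Inv}(a,b)=4$; otherwise none of them is a solution and $\nabla_{Inv}(a,b)=0$. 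This is precisely the claimed formula.

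I do not expect a genuine obstacle here: the computation is a short rational-function manipulation. The only points requiring care are the bookkeeping at the places where $Inv$ departs from $x\mapsto 1/x$ (the substitutions $x\in\{0,a,b,a+b\}$) and checking that these four points are genuinely distinct; as a sanity check, the middle case of the formula is vacuous precisely when $\F_4\not\subseteq\Fbn$, i.e. when $n$ is odd, in which case $\nabla_{Inv}(a,b)=0$ for all $a,b$ with $ab(a+b)\ne 0$.
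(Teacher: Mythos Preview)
Your proof is correct. The paper itself does not prove this proposition; it is quoted from \cite{EM22} as background. Your argument---splitting according to whether $x\in\{0,a,b,a+b\}$, reducing the generic case to $b(a+b)=0$, and reducing the four boundary cases to $a^2+ab+b^2=0$---is exactly the standard direct computation, and it matches the technique the paper uses later (e.g.\ in the proof of Theorem~\ref{Inv01_char_even_thm}) when analyzing the swapped variants.
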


Table \ref{Inv_char_odd_table} describes $\nabla_{Inv}(a,b)$ when $p$ is odd, which was studied in \cite{LYT22}. 
In \cite{LYT22}, the authors state that 
\begin{equation}\label{Inv_char_odd_eq}
\nabla_{Inv}(a,b)=
\begin{cases}
1, &\text{ if }ab\ne 0, a\ne b,\\
3, &\text{ if }ab\ne 0, (a^2+ab+b^2)(a^2-ab+b^2)=a^4+a^2b^2+b^4=0,\\
p^n, &\text{ otherwise,}
\end{cases}
\end{equation}
for the case $3\mid (p^n-1)$. But, the cases $\nabla_{Inv}(a,b)=1$ and $\nabla_{Inv}(a,b)=3$ in \eqref{Inv_char_odd_eq} do not totally cover all nontrivial cases that $ab\ne 0$, leading to a potential misunderstanding that $\nabla_{Inv}=p^n$. Hence, in our opinion, this case should be clarified as shown in Table \ref{Inv_char_odd_table}.

\begin{table}
\begin{center}
\begin{tabular}{|c|c|c|c|}
\hline $\nabla_{Inv}(a,b)$ & 1 & 3 & $p^n$\\
\hline\hline $p=3$ & $ab\ne 0$, $a^2 \ne b^2$ & $ab\ne 0$, $a^2 = b^2$ & $ab=0$\\
\hline $3\mid (p^n-1)$ & $ab\ne 0$, $a^4+a^2b^2+b^4\ne 0$ & $ab\ne 0$,  $a^4+a^2b^2+b^4=0$ & $ab=0$\\
\hline $p>3$ and $3\nmid(p^n-1)$ & $ab\ne0$ & - & $ab=0$\\
\hline
\end{tabular}
    \caption{The value $\nabla_{Inv}(a,b)$ when $p$ is odd \cite{LYT22}}\label{Inv_char_odd_table}
\end{center}
\end{table}

The following is the well-known lemma for the number of solutions of quadratic equations on $\Fbn$, which is useful for our results.
\begin{lemma}(\cite{LN97})\label{quad lemma} Let $a\in \Fbnmul$ and $b,c\in \Fbn$.
\begin{enumerate}
\item $ax^2+bx+c=0$ has one solution in $\Fbn$ if and only if $b=0$.
\item $ax^2+bx+c=0$ has two solutions in $\Fbn$ if and only if $\tr\left(\frac{ac}{b^2}\right)=0$.
\item $ax^2+bx+c=0$ has no solution in $\Fbn$ if and only if $b\ne0$ and $\tr\left(\frac{ac}{b^2}\right)=1$.
\end{enumerate}
\end{lemma}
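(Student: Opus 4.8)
The plan is to normalize the quadratic and then analyze the $\F_2$-linear map $y\mapsto y^2+y$ through the absolute trace; this is a textbook argument, so the sketch is short.

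First I would dispose of the case $b=0$: the equation becomes $x^2=c/a$, and since the Frobenius map $x\mapsto x^2$ is a bijection of $\Fbn$ there is exactly one solution, namely $(c/a)^{2^{n-1}}$. For the converse half of item (1), note that if $b\neq 0$ and $x_0$ is a solution then so is $x_0+b/a$ (the two extra terms $b^2/a$ arising from the quadratic and linear parts cancel in characteristic $2$), and $x_0+b/a\neq x_0$; hence for $b\neq 0$ the solution set is invariant under the fixed-point-free involution $x\mapsto x+b/a$, so it has even cardinality, i.e.\ $0$ or $2$, and never $1$. This proves item (1) and shows that items (2) and (3) amount to deciding when the solution set is empty.

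Assuming $b\neq 0$, I would substitute $x=(b/a)y$ and divide by $b^2/a$ to reach the normalized equation $y^2+y+\delta=0$ with $\delta=ac/b^2$; as $y\mapsto (b/a)y$ is a bijection of $\Fbn$, the number of solutions is unchanged. Consider the $\F_2$-linear map $T\colon\Fbn\to\Fbn$, $T(y)=y^2+y$. Its kernel is $\{y:y^2=y\}=\F_2$, of size $2$, so $T(\Fbn)$ has index $2$ in $\Fbn$. Since $\tr(T(y))=\tr(y^2)+\tr(y)=0$ for all $y$ (because the absolute trace satisfies $\tr(y^2)=\tr(y)$), we have $T(\Fbn)\subseteq\{z\in\Fbn:\tr(z)=0\}$, and comparing cardinalities ($2^{n-1}$ on each side) gives equality.

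Finally, $y^2+y+\delta=0$ is solvable if and only if $\delta\in T(\Fbn)$, i.e.\ if and only if $\tr(\delta)=0$, in which case the solution set is a coset of $\ker T=\F_2$ and therefore has exactly two elements; otherwise it is empty. Substituting back $\delta=ac/b^2$ yields items (2) and (3). There is no genuine obstacle here; the only step that is not completely immediate is the identification $T(\Fbn)=\{z:\tr(z)=0\}$, which I would justify exactly as above, via Frobenius-invariance of the trace together with the dimension count.
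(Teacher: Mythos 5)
Your proof is correct and is the standard argument: reduce to the Artin--Schreier form $y^2+y=\delta$ and identify the image of $y\mapsto y^2+y$ with the trace-zero hyperplane, which is precisely how this lemma is established in the cited reference \cite{LN97}; the paper itself gives no proof and simply cites that source. No gaps — the normalization, the even-cardinality argument for $b\ne 0$, and the kernel/image dimension count are all handled correctly.
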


\section{The second-order zero differential uniformity of $Inv \circ (0,1)$}\label{sec_gam0}

\subsection{The second-order zero differential spectrum in even characteristic}

In this subsection, we study the second-order zero differential spectrum of $Inv \circ (0,1)$. In the next theorem, we investigate the FBCT of $f(x) = Inv\circ (0,1)$.

\begin{theorem}\label{Inv01_char_even_thm} Let $p=2$ and $f(x) = Inv\circ(0,1)$. Then,
\begin{equation*}
\nabla_f(a,b)=
\begin{cases}
2^n, &\text{ if }ab(a+b)= 0,\\
8, &\text{ if }a\ne b, a^3+a+1=b^3+b+1=0,\\
4, &\text{ if }ab(a+b)\ne 0\text{ and }[a,b\in\F_4^*\text{ or }(a,b)\in S],\\
0, &\text{ otherwise.}
\end{cases}
\end{equation*}
where $S=\{(a,b)\in\Fbnmul\times\Fbnmul : a^2+b^2+ab\in \{1, ab(a+b)\}\}\setminus\{(a,b)\in\F_{2^3}^*\times\F_{2^3}^* : a^3+a+1=b^3+b+1=0\}$.
\end{theorem}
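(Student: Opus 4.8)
The plan is to write $f = Inv + \delta$, where $\delta\colon\Fbn\to\Fbn$ is the indicator of $\{0,1\}$ (that is, $\delta(0)=\delta(1)=1$ and $\delta(c)=0$ for $c\notin\{0,1\}$); this is legitimate since $Inv$ fixes both $0$ and $1$. Fix $a,b$ with $ab(a+b)\ne0$ (the case $ab(a+b)=0$ is Proposition~\ref{FBU_prop_even}(2)) and set $V=\{0,a,b,a+b\}$, a two-dimensional $\F_2$-subspace of $\Fbn$. Writing $h(x):=Inv(x+a+b)+Inv(x+a)+Inv(x+b)+Inv(x)$, equation \eqref{FBU_eqn} for $f$ reads $h(x)+\sum_{v\in V}\delta(x+v)=0$, so $\nabla_f(a,b)$ is the number of its solutions $x\in\Fbn$. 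Two elementary observations drive the proof: first, $h(x+v)=h(x)$ for every $v\in V$ because the four summands of $h(x)$ are merely permuted, so $h$ is constant on each coset of $V$, equal to $h(0)$ on $V$ and to $h(1)$ on $1+V$; second, $\sum_{v\in V}\delta(x+v)$, a sum in the characteristic-$2$ field of $\{0,1\}$-valued terms, equals $1$ exactly when $\#\big((x+V)\cap\{0,1\}\big)$ is odd, and $x+V$ meets $\{0\}$ iff $x\in V$ and meets $\{1\}$ iff $x\in1+V$.

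These facts produce a dichotomy according to whether $1\in\{a,b,a+b\}$. If $1\in V$, then $1+V=V$, every coset $x+V$ contains both or neither of $0,1$, the $\delta$-sum vanishes identically, and hence $\nabla_f(a,b)=\nabla_{Inv}(a,b)$; by the formula for $\nabla_{Inv}$ recalled above this is $4$ when $a/b\in\F_4\setminus\F_2$ and $0$ otherwise, and one checks that ``$a/b\in\F_4\setminus\F_2$ and $1\in\{a,b,a+b\}$'' is equivalent to ``$a,b\in\F_4^*$'' (when $a/b$ is a primitive cube root of unity one has $\{a,b,a+b\}=b\cdot\F_4^*$). If $1\notin V$, then $V$ and $1+V$ are disjoint, the $\delta$-sum equals $1$ precisely on $V\cup(1+V)$, and a short count using $h\equiv h(0)$ on $V$, $h\equiv h(1)$ on $1+V$, and $\#V=\#(1+V)=4$ gives
\begin{equation*}
\nabla_f(a,b)=\nabla_{Inv}(a,b)-4\big([h(0)=0]+[h(1)=0]\big)+4\big([h(0)=1]+[h(1)=1]\big).
\end{equation*}

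Next I would compute $h(0)$ and $h(1)$ explicitly:
\begin{equation*}
h(0)=\frac1{a+b}+\frac1a+\frac1b=\frac{a^2+ab+b^2}{ab(a+b)},\qquad
h(1)=\frac1{1+a+b}+\frac1{1+a}+\frac1{1+b}+1=\frac{ab(a+b)}{(1+a)(1+b)(1+a+b)},
\end{equation*}
the second being valid since $1\notin\{a,b,a+b\}$ keeps the denominators nonzero. In particular $h(1)\ne0$ always, so $[h(1)=0]=0$; moreover $h(0)=0\iff a^2+ab+b^2=0\iff a/b\in\F_4\setminus\F_2\iff\nabla_{Inv}(a,b)=4$, so $\nabla_{Inv}(a,b)=4[h(0)=0]$ and the displayed formula collapses to $\nabla_f(a,b)=4\big([h(0)=1]+[h(1)=1]\big)$. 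Clearing denominators, $h(0)=1\iff a^2+ab+b^2=ab(a+b)$ and $h(1)=1\iff a^2+ab+b^2=1$.

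Finally I would match the resulting values $0,4,8$ against the statement. The algebraic heart of the matter is the identity $a^2+ab+b^2=ab(a+b)=1\iff a^3+a+1=b^3+b+1=0$: putting $s=a+b$ and $q=ab$, the element $a$ satisfies $a^2=sa+q$, whence $a^3=(s^2+q)a+sq=(a^2+ab+b^2)\,a+ab(a+b)$, which equals $a+1$ exactly when $a^2+ab+b^2$ and $ab(a+b)$ are both $1$ (and symmetrically for $b$); the converse is Vieta's formulas applied to the irreducible polynomial $x^3+x+1$. Consequently $\nabla_f(a,b)=8$ precisely for the pairs of distinct roots of $x^3+x+1$, and for such pairs $a,b,a+b\in\F_{2^3}^*\setminus\{1\}$, so automatically $1\notin V$; when $1\notin V$, $\nabla_f(a,b)=4$ iff exactly one of the equalities in $a^2+ab+b^2\in\{1,ab(a+b)\}$ holds, and since ``both hold'' is exactly the root condition just treated, this is precisely $(a,b)\in S$; in all remaining cases $\nabla_f(a,b)=0$. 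To conclude one checks that the bookkeeping across the two branches is consistent: any pair with $ab(a+b)\ne0$ and $1\in\{a,b,a+b\}$ — in particular any pair $a,b\in\F_4^*$ — fails $a^2+ab+b^2\in\{1,ab(a+b)\}$ and hence is not in $S$, while removing the $\F_{2^3}^*$-root pairs in the definition of $S$ excises exactly the $8$-valued pairs from the $4$-case. I expect this reconciliation of the overlapping boundary cases with the rather intricate definition of $S$ — rather than any individual computation — to be the main source of friction in writing up the proof.
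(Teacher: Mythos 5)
Your argument is correct, and it reaches the theorem by a genuinely different organization than the paper, although the computational core ends up the same. The paper proceeds by direct case analysis of \eqref{FBU_eqn}: it first treats $1\in\{a,b,a+b\}$ by hand (via $a=1$ together with the symmetries in Proposition \ref{FBU_prop_even}), then for $1\notin\{a,b,a+b\}$ shows that solutions can only lie in $\{0,a,b,a+b\}\cup\{1,a+1,b+1,a+b+1\}$, with the two blocks contributing exactly the conditions \eqref{Inv01_char_even_proof_eq1} and \eqref{Inv01_char_even_proof_eq2}, and finally characterizes the $8$-case by dividing those equations by $a^2,a^3$ (resp. $b^2,b^3$). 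You instead decompose $f=Inv+\delta$ with $\delta$ the indicator of $\{0,1\}$, use that $h(x)=\sum_{v\in V}Inv(x+v)$ is constant on cosets of $V=\{0,a,b,a+b\}$, and reduce everything to the known FBCT of $Inv$ (recalled from \cite{EM22} in Section \ref{sec_pre}) plus the two values $h(0)$ and $h(1)$; your conditions $h(0)=1$ and $h(1)=1$ are precisely \eqref{Inv01_char_even_proof_eq1} and \eqref{Inv01_char_even_proof_eq2}, and your counting identity $\nabla_f(a,b)=\nabla_{Inv}(a,b)-4([h(0)=0]+[h(1)=0])+4([h(0)=1]+[h(1)=1])$ is verified correctly. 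What your route buys is a cleaner handling of the points outside $V\cup(1+V)$ and of the case $1\in\{a,b,a+b\}$, where the $\delta$-correction cancels and $\nabla_f=\nabla_{Inv}$ immediately, at the cost of invoking the inverse function's FBCT as a black box, which the paper effectively re-derives inside its case analysis; your Vieta-style argument (via $a^2=(a+b)a+ab$ and the symmetric functions of the roots of $x^3+x+1$) is an equally valid alternative to the paper's division trick for the $8$-characterization. The items you defer to "bookkeeping" are genuinely one-line checks and not gaps: for $a=1$ the condition $a^2+ab+b^2=1$ forces $b\in\F_2$ and $a^2+ab+b^2=ab(a+b)$ forces $1=0$ (similarly for $b=1$ and $a+b=1$), so no pair with $1\in\{a,b,a+b\}$, and in particular no pair from $\F_4^*$, lies in $S$, which is exactly the disjointness the paper also records as $S_3\cap(S_1\cup S_2)=\emptyset$; in a write-up you should spell these lines out.
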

\begin{proof}
If $a=1$, then $b\ne 0,1$ and \eqref{FBU_eqn} is equivalent to $f(x)+f(x+1)+f(x+b)+f(x+b+1)=0$. If $x\in \{0,1,b,b+1\}$, then $f(x)+f(x+1)+f(x+b)+f(x+b+1)=0$ is equivalent to $b^2+b+1=0$, that is, $b\in \F_4 \setminus \F_2$. If $x\not\in \{0,1,b,b+1\}$, then $f(x)+f(x+1)+f(x+b)+f(x+b+1)=0$ is equivalent to $b^2+b=0$ , a contradiction. So, $\nabla_f(1,b)=4$ if and only if $b\in \F_4 \setminus \F_2$, and $\nabla_f(1,b)=0$ otherwise. By Proposition \ref{FBU_prop_even}, we can see that $\nabla_f(1,a) =\nabla_f(a,1)=\nabla_f(a,a+1)=4$ if and only if $a\in \F_4 \setminus \F_2$, and $\nabla_f(a,1)=\nabla_f(a,a+1)=0$ otherwise. To summarize, when $1\in \{a,b,a+b\}$, 
\begin{equation*}
\nabla_f(a,b)=
\begin{cases}
4,&\text{  if }a,b\in\F_4^*,\\
0,   &\text{ otherwise.}
\end{cases}
\end{equation*}

Now we assume that $ab(a+b)\ne 0$ and $1\not\in \{a,b,a+b\}$.
If $x\in \{0,a,b,a+b\}$, then \eqref{FBU_eqn} is equivalent to 
\begin{equation}\label{Inv01_char_even_proof_eq1}
a^2b+ab^2+a^2+b^2+ab=0.
\end{equation} 
If $x\in \{1,a+1,b+1,a+b+1\}$, then \eqref{FBU_eqn} is equivalent to 
\begin{equation}\label{Inv01_char_even_proof_eq2}
a^2+b^2+ab=1.
\end{equation}
If $x\not \in \{0,a,b,a+b\}\cup\{1,a+1,b+1,a+b+1\}$, then \eqref{FBU_eqn} is equivalent to $x^{2^n-2}+(x+a)^{2^n-2}+(x+b)^{2^n-2}+(x+a+b)^{2^n-2}=0$, which is equivalent to $ab(a+b)=0$, a contradiction. Hence, \eqref{FBU_eqn} has at most $8$ solutions $0,a,b,a+b,1,a+1,b+1,a+b+1$.

Next we show that $\nabla_f(a,b)=8$ if and only if $a^3+a+1=b^3+b+1=0$. Assume that $\nabla_f(a,b)=8$. Then, we have that  \eqref{Inv01_char_even_proof_eq1} and \eqref{Inv01_char_even_proof_eq2} hold. Adding \eqref{Inv01_char_even_proof_eq1} and \eqref{Inv01_char_even_proof_eq2}, we have
\begin{equation}\label{Inv01_char_even_proof_eq3}
a^2b+ab^2=1.
\end{equation}
Dividing $a^2$ on both sides of \eqref{Inv01_char_even_proof_eq2} and dividing $a^3$ on both sides of \eqref{Inv01_char_even_proof_eq3}, we have 
\begin{equation*}
\left(\Frac{b}{a}\right)^2+\Frac{b}{a}=1+\left(\Frac{1}{a}\right)^2\text{ and }\left(\Frac{b}{a}\right)^2+\Frac{b}{a}=\left(\Frac{1}{a}\right)^3
\end{equation*} 
respectively, and hence adding the two above equations implies that $a^3+a+1=0$. By a symmetric method that dividing $b^2$ to \eqref{Inv01_char_even_proof_eq2} and dividing $b^3$ to \eqref{Inv01_char_even_proof_eq3}, we also have $b^3+b+1=0$.

Conversely, we assume that $a^3+a+1=b^3+b+1=0$ with $a\ne b$, and it is enough to show that \eqref{Inv01_char_even_proof_eq1} and \eqref{Inv01_char_even_proof_eq2} hold. We denote another solution of $x^3+x+1=0$ by $\alpha$ so that $\alpha=a+b$ and $\frac{1}{\alpha}=ab$. Then we have
\begin{align*}
a^2b+ba^2+a^2+b^2+ab&=ab(a+b)+(a+b)^2+ab=1+\alpha^2+\frac{1}{\alpha}=\frac{\alpha^3+\alpha+1}{\alpha}=0,\\
a^2+b^2+ab&=(a+b)^2+ab=\alpha^2+\frac{1}{\alpha}=\frac{\alpha^3+1}{\alpha}=\frac{\alpha}{\alpha}=1,
\end{align*}
which completes the proof.
\end{proof}

Next, we study the second-order zero differential spectrum of $f=Inv \circ (0,1)$. Since there are three elements $x\in \Fbn$ such that $x^3+x+1=0$ if and only if $3\mid n$, we have the following :

\begin{equation}\label{Inv01_DS8}
\omega_8 =
\begin{cases}
6, &\text{ if }3\mid n,\\
0, &\text{ if }3\nmid n.
\end{cases}
\end{equation}
We denote 
\begin{align*}
S_1 &= \{(a,b)\in \Fbnmul\times \Fbnmul : \eqref{Inv01_char_even_proof_eq1}\text{ holds with }a\ne b\}\\
S_2 &= \{(a,b)\in \Fbnmul\times \Fbnmul : \eqref{Inv01_char_even_proof_eq2}\text{ holds with }a\ne b\}\\
S_3 &= \{(a,b)\in \Fbnmul\times \Fbnmul : (a,b)\in \F_4^* \times \F_4^*, a\ne b\}
\end{align*}

Then, by Theorem \ref{Inv01_char_even_thm}, we have $\omega_8 = \#(S_1 \cap S_2)$ and
\begin{equation}\label{Inv01_DS4_eqn}
\omega_4 = \#[(S_1 \cup S_2) -(S_1 \cap S_2)] +\#S_3=\#S_1 + \#S_2+\#S_3-2\omega_8,
\end{equation}
because it can be easily checked that $S_3 \cap(S_1 \cup S_2) =\emptyset$. Note that \eqref{Inv01_char_even_proof_eq1} is equivalent to $(a+1)b^2+a(a+1)b+a^2=0$. So, for a fixed $a\in \Fbn\setminus\{0,1\}$, there are two elements $b\in \Fbn$ satisfying \eqref{Inv01_char_even_proof_eq1} if and only if $\tr\left(\frac{1}{a+1}\right)=0$, by Lemma \ref{quad lemma}. Since $\{\frac{1}{a+1} : a\in \Fbn\setminus\{0,1\}\}=\Fbn\setminus\{0,1\}$, we have
\begin{equation*}
\#S_1 = 
\begin{cases}
2^{n}-2, &\text{if }2\nmid n,\\
2^{n}-4, &\text{if }2\mid n.
\end{cases}
\end{equation*}
Similarly, \eqref{Inv01_char_even_proof_eq2} is equivalent to $b^2 +ab +a^2+1=0$. So, for a fixed $a\in \Fbnmul$, there are two elements $b\in \Fbn$ satisfying \eqref{Inv01_char_even_proof_eq2} if and only if $\tr\left(\frac{1}{a}+1\right)=0$, by Lemma \ref{quad lemma}. $(a,b)=(1,1)$ is a solution of \eqref{Inv01_char_even_proof_eq2} with $ab(a+b)=0$ and then we have $\nabla_f(1,1)=2^n$, so the case $a=1$ need to be excluded. We have $\{\frac{1}{a}+1 : a\in \Fbn\setminus\{0,1\}\}=\Fbn\setminus\{0,1\}$ and hence $\#S_2 = \#S_1$.
Furthermore, $\#S_3=6$ when $n$ is even, and $\#S_3=0$ when $n$ is odd. Therefore, by \eqref{Inv01_DS4_eqn}, we have
\begin{equation}\label{Inv01_DS4}
\omega_4 =
\begin{cases}
2^{n+1}-14, &\text{ if }2\mid n, 3\mid n,\\
2^{n+1}-2, &\text{ if }2\mid n, 3\nmid n,\\
2^{n+1}-16, &\text{ if }2\nmid n, 3\mid n,\\
2^{n+1}-4, &\text{ if }2\nmid n, 3\nmid n.
\end{cases}
\end{equation}
Observe that $\#\{(a,b)\in \Fbn \times \Fbn : ab(a+b)\ne 0\}=(2^n-2)(2^n-1)$. By Theorem \ref{Inv01_char_even_thm}, we have
\begin{equation*}\label{Inv01_DS0}
\omega_0 = (2^n-2)(2^n-1)-\omega_4-\omega_8=
\begin{cases}
2^{2n}-5\cdot 2^{n}+10, &\text{ if }2\mid n, 3\mid n,\\
2^{2n}-5\cdot 2^{n}+4, &\text{ if }2\mid n, 3\nmid n,\\
2^{2n}-5\cdot 2^{n}+12, &\text{ if }2\nmid n, 3\mid n,\\
2^{2n}-5\cdot 2^{n}+6, &\text{ if }2\nmid n, 3\nmid n.
\end{cases}
\end{equation*}

From \eqref{Inv01_DS8} and \eqref{Inv01_DS4}, we have the following corollary on the second-order zero differential uniformity of $Inv \circ (0,1)$.

\begin{corollary} Let $p=2$ and $f(x) = Inv\circ(0,1)$. Then,
\begin{equation*}
\nabla_f=
\begin{cases}
4, &\text{ if }3\nmid n,\\
8, &\text{ if }3\mid n.
\end{cases}
\end{equation*}
\end{corollary}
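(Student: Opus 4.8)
The plan is to read off the second-order zero differential uniformity directly from the case distinction in Theorem~\ref{Inv01_char_even_thm}, since that theorem already computes $\nabla_f(a,b)$ for every pair $(a,b)$. The quantity $\nabla_f$ is, by definition, $\max\{\nabla_f(a,b) : a,b\in\Fbnmul,\ a\ne b\}$, so the trivial value $\nabla_f(a,b)=2^n$ occurring when $ab(a+b)=0$ is irrelevant, and we only need to compare the remaining possible values $0$, $4$, and $8$.

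First I would observe that the value $8$ is attained if and only if there exist distinct $a,b\in\Fbn$ with $a^3+a+1=b^3+b+1=0$; this happens precisely when $x^3+x+1$ splits in $\Fbn$, i.e.\ when $3\mid n$, since $x^3+x+1$ is irreducible over $\F_2$ and hence its roots lie in $\F_{2^3}$, which is contained in $\Fbn$ exactly when $3\mid n$. This is exactly the content of \eqref{Inv01_DS8}: $\omega_8=6$ iff $3\mid n$ and $\omega_8=0$ otherwise. So when $3\mid n$ we immediately get $\nabla_f=8$.

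When $3\nmid n$, no pair gives the value $8$, so $\nabla_f\le 4$; to conclude $\nabla_f=4$ it suffices to exhibit one pair $(a,b)$ with $ab(a+b)\ne 0$, $a\ne b$, and $\nabla_f(a,b)=4$. Equivalently, by Theorem~\ref{Inv01_char_even_thm} (and the computation of $\omega_4$), it is enough to know $\omega_4>0$. From \eqref{Inv01_DS4} one has $\omega_4\in\{2^{n+1}-16,\,2^{n+1}-4\}$ when $3\nmid n$, both of which are strictly positive for all $n\ge 1$ (indeed for $n\ge 2$; the degenerate small cases can be checked directly, and in any case $n\ge 3$ whenever one wants $\Fbn$ of reasonable size). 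Hence $\nabla_f=4$ when $3\nmid n$.

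There is essentially no obstacle here: the corollary is a bookkeeping consequence of Theorem~\ref{Inv01_char_even_thm} together with the spectrum counts already established. The only mild point to be careful about is the non-emptiness claim $\omega_4>0$ for $3\nmid n$, which follows from the explicit formula \eqref{Inv01_DS4}, and the fact that $x^3+x+1$ has its roots in $\Fbn$ exactly when $3\mid n$, which follows from irreducibility of $x^3+x+1$ over $\F_2$ and the subfield structure of finite fields.
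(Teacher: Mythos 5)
Your proposal is correct and follows essentially the same route as the paper, which obtains the corollary directly from Theorem~\ref{Inv01_char_even_thm} together with the spectrum counts \eqref{Inv01_DS8} and \eqref{Inv01_DS4}: the value $8$ occurs exactly when $x^3+x+1$ has its roots in $\Fbn$, i.e.\ when $3\mid n$, and otherwise $\omega_4>0$ forces $\nabla_f=4$. The only slip is your quotation of \eqref{Inv01_DS4} for $3\nmid n$, where the values are $2^{n+1}-2$ and $2^{n+1}-4$ (not $2^{n+1}-16$, which belongs to the case $3\mid n$); this is harmless since positivity holds either way.
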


\subsection{The second-order zero differential uniformity in odd characteristic}\label{subsec_inv01_FBU}

In the next theorem, we study the second-order zero differential uniformity of $Inv \circ (0,1)$ in odd characteristic.

\begin{theorem}\label{Inv01_char_odd_thm} Let $p$ be an odd prime and $f(x) = Inv\circ(0,1)$ on $\Fpn$. Then, $\nabla_f=4$ if $p=29$ or $p=37$, and $\nabla_f\le 3$ otherwise. Furthermore, $\nabla_f(a,b)=4$ if and only if one of the following conditions holds.
\begin{enumerate}
\item $p=29$ and $(a,b)=(\pm 2, \pm 12)$ or $(a,b)=(\pm12, \pm 2)$.
\item $p=37$ and $(a,b)=(\pm 2, \pm 6)$ or $(a,b)=(\pm6, \pm 2)$.
\end{enumerate}
\end{theorem}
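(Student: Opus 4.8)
\textit{Proof proposal.} Fix $a,b\in\Fpnmul$, and note that $f$ coincides with $Inv$ everywhere except that $f(0)=1$, $f(1)=0$, whereas $Inv(0)=0$, $Inv(1)=1$. Write $\Delta_a\Delta_b h(x)=h(x+a+b)-h(x+a)-h(x+b)+h(x)$, so that $\nabla_f(a,b)$ counts the zeros of $\Delta_a\Delta_b f$. A direct computation gives the identity
\begin{equation*}
\Delta_a\Delta_b Inv(x)=\frac{ab\,(2x+a+b)}{x(x+a)(x+b)(x+a+b)}
\end{equation*}
valid whenever the denominator is nonzero. The first step is to \emph{localize} the solutions: if $x$ is a zero of $\Delta_a\Delta_b f$ and none of $x,x+a,x+b,x+a+b$ lies in $\{0,1\}$, then $f$ agrees with $Inv$ at all four arguments, and the identity forces $2x+a+b=0$. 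Hence every solution belongs to
\begin{equation*}
T=\bigl\{\,0,\ 1,\ -a,\ 1-a,\ -b,\ 1-b,\ -a-b,\ 1-a-b,\ -\tfrac{a+b}{2}\,\bigr\},
\end{equation*}
a set of at most nine elements, and we refine the trivial bound $\nabla_f(a,b)\le 9$ from there.

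Next I would treat the \emph{generic} case, in which the nine points of $T$ are distinct and, at each of the eight points $0,1,-a,\ldots$, the unique argument of $\Delta_a\Delta_b f$ falling in $\{0,1\}$ is the obvious one; this excludes only the finitely many lines where $a\in\{\pm1\}$, $b\in\{\pm1\}$, or $a\pm b\in\{0,\pm1,\pm2\}$, which are handled separately. Under genericity $x_0=-\tfrac{a+b}{2}$ is a clean point, hence always a solution; evaluating $\Delta_a\Delta_b f$ at the other eight candidates with $f(0)=1$, $f(1)=0$ and the identity above produces eight polynomial conditions on $(a,b)$, organized into four pairs:
\begin{align*}
\text{(I)}\ \ & a^2+ab+b^2=\pm\,ab(a+b), & \text{(II)}\ \ & a^2-ab+b^2=\pm\,ab(a-b),\\
\text{(III)}\ \ & (a+b\pm1)^2=ab, & \text{(IV)}\ \ & \bigl(1\pm(a-b)\bigr)^2=-ab,
\end{align*}
where in pair (I) the two versions are the conditions for $x=0$ and $x=-a-b$, in (II) for $x=-a$ and $x=-b$, in (III) for $x=1$ and $x=1-a-b$, and in (IV) for $x=1-a$ and $x=1-b$. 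Subtracting the two members of any pair yields $ab(a+b)=0$, $ab(a-b)=0$, $a+b=0$, or $a-b=0$ respectively, all impossible under genericity; so at most one member of each pair holds, and $\nabla_f(a,b)=1+\#\{\text{active pairs}\}$. In particular $\nabla_f(a,b)\ge 4$ forces three of the four pairs to be active, and $\nabla_f(a,b)\ge 5$ forces all four.

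The final step is to analyze the resulting systems. There are $\binom{4}{3}$ ways to choose the three active pairs, each with $2^3$ choices of member, plus the all-four case; the symmetries $\nabla_f(a,b)=\nabla_f(b,a)=\nabla_f(-a,b)=\nabla_f(a,-b)$ (the latter from \eqref{-a-b_eqn}) act on these and cut the list to a few essentially distinct systems of three (or four) polynomial equations in $a,b$. Eliminating $a$ and $b$ via resultants — the step where SageMath is used — one finds that such a system is consistent over $\overline{\F}_p$ only in two configurations, each consisting of two quadratic and one cubic equation, and then only when $p=29$, with solution set exactly $\{(\pm2,\pm12),(\pm12,\pm2)\}$, or $p=37$, with solution set exactly $\{(\pm2,\pm6),(\pm6,\pm2)\}$; in both cases all solutions lie in the prime field, and all four pairs are never simultaneously active. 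Hence $\nabla_f(a,b)\le 4$ generically for every $p$ and $n$, with equality precisely at the listed points (where one checks directly that they are generic and that exactly three pairs are active, so $\nabla_f(a,b)=4$). A shorter, parallel analysis of the finitely many non-generic lines shows $\nabla_f(a,b)\le 3$ there for all $p$ and $n$, which completes the proof.

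The main obstacle is this last step: the bookkeeping over the roughly $32$ systems together with the non-generic lines is substantial, and isolating exactly the primes $29$ and $37$ — and ruling them out over any extension — rests on a careful resultant/elimination computation. The conceptual content that makes the analysis finite and effective is the combination of the localization of all solutions to the nine-element set $T$ with the observation that the perturbation at $\{0,1\}$ splits the "extra" solutions into four mutually exclusive pairs, capping $\nabla_f$ at $4$.
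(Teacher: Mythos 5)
Your setup is sound and is essentially the paper's: every solution is localized to the nine candidates $\{0,1,-a,1-a,-b,1-b,-a-b,1-a-b,-\tfrac{a+b}{2}\}$, the eight ``special'' candidates give the same eight polynomial conditions as in the paper (your (I)--(IV) are correct reformulations of \eqref{podd x=0 eqn}--\eqref{podd x=-a-b+1 eqn}), and the one-member-per-pair exclusions are valid. But the proof stops exactly where the theorem's content lies. First, the decisive claim --- that three active pairs force $p=29$ with $(a,b)\in\{(\pm2,\pm12),(\pm12,\pm2)\}$ or $p=37$ with $(a,b)\in\{(\pm2,\pm6),(\pm6,\pm2)\}$, that four active pairs never occur, and that this holds over every extension $\Fpn$ --- is delegated to a resultant/SageMath elimination that is described but never carried out, and whose outcome you simply assert. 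Second, the non-generic lines ($a,b\in\{\pm1\}$, $a\pm b\in\{0,\pm1,\pm2\}$) are dismissed with ``a shorter, parallel analysis shows $\nabla_f(a,b)\le 3$''; in the paper this is most of the proof, because these lines produce a long list of sporadic small-prime coincidences ($p=3,5,7,11,13,19,29,43,83,137,\dots$) each of which must be checked to contribute at most $3$, and checked for all $n$ (e.g.\ roots of $b^2+b-1$ or $b^2=-1$ exist only in even-degree extensions). As written, the proposal is a correct strategy plus the theorem's conclusion, not a proof.

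It is also worth noting that the unexecuted elimination is avoidable. The paper proves a stronger exclusion than your one-per-pair statement: no two of the four unshifted equations ($x=0,-a,-b,-a-b$) can hold simultaneously, since any such pair forces $a=\pm1$, $b=\pm1$, $a=b$ or $a=-b$; and any two of the shifted equations ($x=1,1-a,1-b,1-a-b$) can hold only when $a=\pm2,\ b^2=-1$ (or symmetrically $b=\pm2,\ a^2=-1$). This caps the count at $1+2+1=4$ directly, and substituting $a=\pm2,\ b^2=-1$ into the single compatible unshifted equation immediately pins down $p=29,\ b=\pm12$ and $p=37,\ b=\pm6$ by hand --- no resultants and no bookkeeping over $\binom{4}{3}\cdot 2^3$ systems. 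If you want to complete your version, you should either carry out and report the elimination (including why its integer constants have no prime divisors other than $29$ and $37$ and why the solutions lie in the prime field), or switch to these sharper pairwise exclusions; and in either case you must actually do the case analysis on the excluded lines.
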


\begin{proof}
We first consider the case that $a, b =\pm 1$. If $a=1$ then \eqref{FBU_eqn} is equivalent to 
\begin{equation}\label{podd_a1_eqn}
f(x+b+1)-f(x+1)-f(x+b)+f(x)=0.
\end{equation}
If $a=b=1$, then \eqref{podd_a1_eqn} is equivalent to 
\begin{equation}\label{podd_a1b1_eqn}
f(x+2)-2f(x+1)+f(x)=0.
\end{equation} 
\begin{itemize}
\item If $x=0$, then \eqref{podd_a1b1_eqn} is equivalent to $2^{-1}=-1$, which implies $p=3$.
\item If $x=-1$, then \eqref{podd_a1b1_eqn} is equivalent to $-3=0$, which implies $p=3$.
\item If $x=-2$, then \eqref{podd_a1b1_eqn} is equivalent to $2^{-1}=3$, which implies $p=5$.
\item If $x=1$, then \eqref{podd_a1b1_eqn} holds when $p=3$, and we have $3=1$, which is a contradiction, when $p>3$. 
\item If $x\not \in \{0,\pm 1, -2\}$, then \eqref{podd_ab1_eqn} is equivalent to $2=0$, which is a contradiction.
\end{itemize}
Hence, we have 
\begin{equation}\label{podd_nabla_f11}
\nabla_f(1,1)=
\begin{cases}
3, &\text{ if }p=3,\\
1, &\text{ if }p=5,\\
0, &\text{ otherwise.}
\end{cases}
\end{equation}
By \eqref{-a-b_eqn}, we also have $\nabla_f(1,-1)=\nabla_f(1,1)$ is expressed as \eqref{podd_nabla_f11}. Now we assume that $b\ne \pm 1$.
\begin{equation*}
\begin{array}{ll}
x=0\ : \eqref{podd_a1_eqn}\ \Leftrightarrow\  b^2+b-1=0, & x=1\ : \eqref{podd_a1_eqn}\ \Leftrightarrow\  b^2+3b+4=0,\\
x=-1\ : \eqref{podd_a1_eqn}\ \Leftrightarrow\  2b^2-2b+1=0, & x=-b\ : \eqref{podd_a1_eqn}\ \Leftrightarrow\  b^2-b-1=0,\\
x=-b+1\ : \eqref{podd_a1_eqn}\ \Leftrightarrow\  b^2-3b+4=0, & x=-b-1\ : \eqref{podd_a1_eqn}\ \Leftrightarrow\  2b^2+2b+1=0.
\end{array}
\end{equation*}
 If $x\not \in \{0,\pm 1, -b, -b\pm 1\}$, then \eqref{podd_a1_eqn} is equivalent to $b(2x+b+1)=0$, and hence we have that $x=-\frac{b+1}{2}$ is a solution of \eqref{podd_a1_eqn}. 

We consider the cases that \eqref{podd_a1_eqn} has at least two solutions in $\{0,\pm 1, -b, -b\pm1\}$. We can see that $x=\alpha$ and $x=-b+\alpha$ cannot be solutions of \eqref{podd_a1_eqn} simultaneously, where $\alpha \in\{0,\pm1\}$. So, we do not consider those cases below.
\begin{itemize}
\item If $x=0$ and $x=1$ are solutions of \eqref{podd_a1_eqn}, then we have $b^2+b-1=b^2+3b+4=0$ which implies $b=-\frac 5 2$. If $b=-\frac 5 2$, then $b^2+b-1=b^2+3b+4=\frac{11}{4}$. So, we have $p=11$, and $b=-\frac 5 2 = 3$. 
\item If $x=0$ and $x=-1$ are solutions of \eqref{podd_a1_eqn}, then we have $b^2+b-1=2b^2-2b+1=0$ which implies $b=\frac3 4$.  If $b=\frac 3 4$, then $b^2+b-1= \frac{5}{16}$ and $2b^2-2b+1=\frac58$. So, we have $p=5$, and $b=\frac 3 4 = 2$. 
\item If $x=0$ and $x=-b+1$ are solutions of \eqref{podd_a1_eqn}, then we have $b^2+b-1=b^2-3b+4=0$ which implies $b=\frac5 4$.  If $b=\frac 5 4$, then $b^2+b-1= b^2-3b+4=\frac{29}{16}$. So, we have $p=29$, and $b=\frac 5 4 = -6$.
\item If $x=0$ and $x=-b-1$ are solutions of \eqref{podd_a1_eqn}, then we have $b^2+b-1=2b^2+2b+1=0$ which implies $3=0$ and hence we have $p=3$. Note that $b\in \F_{3^n}$ with $b^2+b-1=0$ exists if and only if $n$ is even.
\item If $x=1$ and $x=-1$ are solutions of \eqref{podd_a1_eqn}, then we have $b^2+3b+4=2b^2-2b+1=0$ which implies $b=-\frac7 8$.  If $b=-\frac 7 8$, then $b^2+3b+4=\frac{137}{64}$ and $2b^2-2b+1=\frac{137}{29}$. So, we have $p=137$, and $b=-\frac 7 8 = 119$.
\item If $x=1$ and $x=-b$ are solutions of \eqref{podd_a1_eqn}, then we have $b^2+3b+4=b^2-b-1=0$ which implies $b=-\frac5 4$.  If $b=-\frac 5 4$, then $b^2+3b+4=b^2-b-1=\frac{29}{16}$. So, we have $p=29$, and $b=-\frac 5 4 = 6$.
\item If $x=1$ and $x=-b-1$ are solutions of \eqref{podd_a1_eqn}, then we have $b^2+3b+4=2b^2+2b+1=0$ which implies $b=-\frac7 4$.  If $b=-\frac 7 4$, then $2b^2+2b+1=\frac{29}8$ and $b^2+3b+4=\frac{29}{16}$. So, we have $p=29$, and $b=-\frac 7 4 = 20$.
\item If $x=-1$ and $x=-b$ are solutions of \eqref{podd_a1_eqn}, then we have $2b^2-2b+1=b^2-b-1=0$ which implies $3=0$ and hence we have $p=3$. Note that $b\in \F_{3^n}$ with $b^2-b-1=0$ exists if and only if $n$ is even.
\item If $x=-1$ and $x=-b+1$ are solutions of \eqref{podd_a1_eqn}, then we have $2b^2-2b+1=b^2-3b+4=0$ which implies $b=\frac7 4$.  If $b=\frac 7 4$, then $2b^2-2b+1=\frac{29}8$ and $b^2-3b+4=\frac{29}{16}$. So, we have $p=29$, and $b=\frac 7 4 = 9$.
\item If $x=-b$ and $x=-b+1$ are solutions of \eqref{podd_a1_eqn}, then we have $b^2-b-1=b^2-3b+4=0$ which implies $b=\frac5 2$.  If $b=\frac 3 4$, then $b^2-b-1=b^2-3b+4= \frac{11}{4}$. So, we have $p=11$, and $b=\frac 5 2 = 8$. 
\item If $x=-b$ and $x=-b-1$ are solutions of \eqref{podd_a1_eqn}, then we have $b^2-b-1=2b^2+2b+1=0$ which implies $b=-\frac3 4$.  If $b=-\frac 3 4$, then $b^2-b-1= \frac{5}{16}$ and $2b^2+2b+1=\frac58$. So, we have $p=5$, and $b=-\frac 3 4 = 3$.
\item If $x=-b+1$ and $x=-b-1$ are solutions of \eqref{podd_a1_eqn}, then we have $b^2-3b+4=2b^2+2b+1=0$ which implies $b=\frac7 8$.  If $b=\frac 7 8$, then $b^2-3b+4=\frac{137}{64}$ and $2b^2+2b+1=\frac{137}{29}$. So, we have $p=137$, and $b=\frac 7 8 = 18$.
\end{itemize}
There are no common case in above, and hence \eqref{podd_a1_eqn} has at most two solutions in $\{0,\pm 1, -b, -b\pm 1\}$ and at most one solution in $\Fpn \setminus\{0,\pm 1, -b, -b\pm 1\}$. Therefore, we have $\nabla_f(1,b)\le 3$ for all $b\in \Fpn$.\\

Now we assume that $a,b \not \in \{0,\pm 1\}$. Next we consider the case that $a\pm b \in \{0,\pm 1\}$. \\
(i) If $a+b=0$, then \eqref{FBU_eqn} is equivalent to 
\begin{equation}\label{podd_ab0_eqn}
2f(x)-f(x+a)-f(x-a)=0
\end{equation}
If $x=0$, $x=1$ and $x=-1$, then \eqref{podd_ab0_eqn} is equivalent to $2=0$, $(a+1)^{-1}=(a-1)^{-1}$ and $\frac{a^2}{(a-1)(a+1)}=0$, which are contradictions. Moreover,
\begin{equation*}
\begin{array}{ll}
x=-a\ : \eqref{podd_ab0_eqn}\ \Leftrightarrow\  a=-\frac 3 2, & x=-a+1\ : \eqref{podd_ab0_eqn}\ \Leftrightarrow\  a=3^{-1},\\
x=a\ : \eqref{podd_ab0_eqn}\ \Leftrightarrow\  a=\frac 3 2, & x=a+1\ : \eqref{podd_ab0_eqn}\ \Leftrightarrow\  a=-3^{-1}.
\end{array}
\end{equation*}
 If $x\not \in  \{0, \pm1, \pm a, \pm a+1\}$ then $0,1\not \in \{x,x+1,x+a,x+1-a\}$, and hence \eqref{podd_ab0_eqn} is equivalent to $\frac{a^2}{x(x+a)(x-a)}=0$, which contradicts the assumption.

\eqref{podd_ab0_eqn} has two solutions $x=-a+1$ and $x=a$ when $3^{-1}=\frac 3 2$ or equivalently $9=2$, which implies $p=7$, and we have $a=3^{-1}=5$. Similarly, when $p=7$ and $a=-3^{-1} = - \frac 3 2 =2$, \eqref{podd_ab0_eqn} has two solutions $x=-a$ and $x=a+1$. 
\eqref{podd_ab0_eqn} has two solutions $x=-a$ and $x=-a+1$ when $3^{-1}=-\frac 3 2$ or equivalently $-9=2$, which implies $p=11$, and we have $a=3^{-1}=4$. Similarly, when $p=11$ and$a=-3^{-1} =  \frac 3 2 =7$, \eqref{podd_ab0_eqn} has two solutions $x=-a$ and $x=-a+1$. Therefore, we have 
\begin{equation*}
\nabla_f(a,-a)=
\begin{cases}
2,&\text{ if }(p,a)\in \{(7,2), (7,5), (11,4),(11,7)\},\\
1,&\text{ if }(p,a)\not \in \{(7,2), (7,5), (11,4),(11,7)\}\text{ and }a\in \{\pm 3^{-1},\pm \frac 3 2\},\\
0,&\text{otherwise.}
\end{cases}
\end{equation*}
(ii) If $a+b=1$, then \eqref{FBU_eqn} is equivalent to 
\begin{equation}\label{podd_ab1_eqn}
f(x+1)-f(x+a)-f(x+1-a)+f(x)=0
\end{equation}
Then, we have
\begin{equation*}
\begin{array}{ll}
x=0\ : \eqref{podd_ab1_eqn}\ \Leftrightarrow\  a^2-a+1=0, & x=1\ : \eqref{podd_ab1_eqn}\ \Leftrightarrow\  a^2-a+4=0,\\
x=-a\ : \eqref{podd_ab1_eqn}\ \Leftrightarrow\  2a^3-2a+1=0, & x=-a+1\ : \eqref{podd_ab1_eqn}\ \Leftrightarrow\  3a^2-7a+4=(3a-4)(a-1)=0\ \Rightarrow \ a=\frac 4 3 ,\\
x=a\ : \eqref{podd_ab1_eqn}\ \Leftrightarrow\  a=-3^{-1}, & x=a-1\ : \eqref{podd_ab1_eqn}\ \Leftrightarrow\  2a^3-6a^2+4a-1=0.
\end{array}
\end{equation*}
If $x=-1$, then \eqref{podd_ab1_eqn} is equivalent to $1=0$, which is a contradiction. If $x\not \in  \{0, \pm1, \pm a, \pm(a-1)\}$ then $0,1\not \in \{x,x+1,x+a,x+1-a\}$, and hence \eqref{podd_ab1_eqn} is equivalent to $a(a-1)(2x+1)=0$, so \eqref{podd_ab1_eqn} has a solution $x=-\frac 1 2$. 

We consider the cases that \eqref{podd_ab1_eqn} has at least two solutions in $\{0,\pm 1, \pm a, \pm (a-1)\}$. 
\begin{itemize}
\item If $x=0$ and $x=1$  are solutions of \eqref{podd_ab1_eqn}, then we have $p=3$. But, if $p=3$, then $0=a^2-a+1=(a+1)^2$ implies $a=-1$, which contradicts $a \not \in \{0,\pm1\}$.
\item If $x=0$ and $x=-a$ are solutions of \eqref{podd_ab1_eqn}, then we have $a^2-a+1=0$ and $2a^3-2a+1=0$. So,  $0=2a^3-2a+1=(a^2-a+1)(2a+2)-2a-1=-2a-1$ and hence we have $a=-\frac 1 2$. Then $a^2-a+1=2a^3-2a+1=\frac 74$ and hence we have $p=7$ and then $a=-\frac 1 2=3$. 
\item If $x=0$ and $x=-a+1$ are solutions of \eqref{podd_ab1_eqn}, then $a=\frac 4 3$ and $0=a^2-a+1=\frac{13}{9}$. So, we have $p=13$, and then $a=\frac 4 3 = 10$.
\item If $x=0$ and $x=a$ are solutions of \eqref{podd_ab1_eqn}, then $a=-\frac 1 3$ and $0=a^2-a+1=\frac{13}{9}$. So, we have $p=13$, and then $a=-\frac 1 3 = 4$.
\item If $x=0$ and $x=a-1$ are solutions of \eqref{podd_ab1_eqn}, then we have $a^2-a+1=0$ and $2a^3-6a^2+4a-1=0$. So, $0=2a^3-6a^2+4a-1=(a^2-a+1)(2a-4)-2a+3=-2a+3$ and hence we have $a=\frac 3 2$. Then $a^2-a+1=\frac74$ and $2a^3-6a^2+4a-1=-\frac74$, and hence we have $p=7$ and $a=\frac 3 2=5$. 
\item  If $x=1$ and $x=-a$ are solutions of \eqref{podd_ab1_eqn}, then we have $a^2-a+4=0$ and $2a^3-2a+1=0$. So,  $0=2a^3-2a+1=(a^2-a+4)(2a+2)-8a-7=-8a-7$ and hence we have $a=-\frac  7 8$. Then $a^2-a+4=\frac{361}{64}$ and $2a^3-2a+1=\frac{361}{256}$, so we have $p=19$ and $a=-\frac 7 8 =11$.  
\item  If $x=1$ and $x=-a+1$ are solutions of \eqref{podd_ab1_eqn}, then we have $a=\frac 4 3$ and $0=a^2-a+4=\frac{40}{9}$. So, we have $p=5$, and then $a=\frac 43=3$. 
\item  If $x=1$ and $x=a$ are solutions of \eqref{podd_ab1_eqn}, then we have $a^2-a+4=0$ and $a=-\frac 1 3$ and $0=a^2-a+4=\frac{40}{9}$. So, we have $p=5$, and then $a=-\frac 13=3$. 

\item If $x=1$ and $x=a-1$ are solutions of \eqref{podd_ab1_eqn}, then we have $a^2-a+4=0$ and $2a^3-6a^2+4a-1=0$. So, $0=2a^3-6a^2+4a-1=(a^2-a+4)(2a-4)-8a+15=-8a+15$ and hence we have $a=\frac {15} 8$. Then $a^2-a+4=\frac{361}{64}$ and $2a^3-6a^2+4a-1=-\frac{361}{256}$, so we have $p=19$ and $a=\frac {15} 8=9$.
\item If $x=-a$ and $x=-a+1$ are solutions of \eqref{podd_ab1_eqn}, then we have $a=\frac 4 3$ and $0=2a^3-2a+1=\frac{83}{27}$. So, we have $p=83$, and then $a=\frac 43=29$. 
\item If $x=-a$ and $x=a$ are solutions of \eqref{podd_ab1_eqn}, then we have $a=-\frac 1 3$ and $0=2a^3-2a+1=\frac{43}{27}$. So, we have $p=43$, and then $a=-\frac 43=14$. 
\item If $x=-a$ and $x=a-1$ are solutions of \eqref{podd_ab1_eqn}, then we have $2a^3-2a+1=0$ and $2a^3-6a^2+4a-1=0$. Then, we have $0=(2a^3-2a+1)(1-a)+(2a^3-6a^2+4a-1)(1+a)-2a^3$, which contradicts $a\ne 0$. 
\item If $x=-a+1$ and $x=a$ are solutions of \eqref{podd_ab1_eqn}, then we have $a=\frac 43$ and $a=-\frac13$. Then, we have $p=5$ and $a=-\frac13=3$. 
\item If $x=-a+1$ and $x=a-1$ are solutions of \eqref{podd_ab1_eqn}, then we have $a=\frac 43$ and $0=2a^3-6a^2+4a-1=-\frac{43}{27}$. Then, we have $p=43$ and $a=\frac 4 3 = 30$.
\item If $x=a$ and $x=a-1$ are solutions of \eqref{podd_ab1_eqn}, then we have $a=-\frac 13$ and $0=2a^3-6a^2+4a-1=-\frac{83}{27}$. Then, we have $p=83$ and $a=-\frac 1 3 = 55$.
\end{itemize}
In the cases considered above, the only candidate that \eqref{podd_ab1_eqn} has more than two solutions in $\{0,\pm 1, \pm a, \pm (a-1)\}$ is when $p=5$ and $a=3$. In this case, $x=1$, $x=-a+1$ and $x=a$ are solutions of \eqref{podd_ab1_eqn}, but we have $-a+1=3=a$, and hence \eqref{podd_ab1_eqn} has only two solutions in $\{0,\pm 1, \pm a, \pm (a-1)\}$. 
Therefore, \eqref{podd_ab1_eqn} has at most two solutions in $\{0,\pm 1, \pm a, \pm (a-1)\}$ and at most one solution $\Fpn \setminus \{0,\pm 1, \pm a, \pm (a-1)\}$, and hence $\nabla_f(a,1-a)\le 3$ for all $a\in \Fpn$. 

The other cases with $a\pm b \in \{0,\pm 1\}$ are covered by the above cases (i) and (ii).
\begin{itemize}
\item If $a+b=-1$, then $-a-b=1$. By \eqref{-a-b_eqn} and (ii), we have $\nabla_f(a,b)=\nabla_f(-a,-b)\le 3$.
\item If $a-b=0$, then $b=a$. By \eqref{-a-b_eqn} and (i), we have $\nabla_f(a,a)=\nabla_f(a,-a)\le 2$.
\item If $a-b=1$, then $b=a+1$. By \eqref{-a-b_eqn} and the above case that $a+b=-1$, we have $\nabla_f(a,a+1)=\nabla_f(a,-a-1)\le 3$.
\item If $a-b=-1$, then $b=a-1$. By \eqref{-a-b_eqn} and (ii), we have $\nabla_f(a,a-1)=\nabla_f(a,-a+1)\le 3$.
\end{itemize}

Next we consider that $a,b,a\pm b \not \in \{0,\pm1\}$.
\begin{align}
x=0 &: \eqref{FBU_eqn} \ \Leftrightarrow\  a^2+ab+b^2=ab(a+b). \label{podd x=0 eqn}\\
x=1 &: \eqref{FBU_eqn} \ \Leftrightarrow\ a^2+ab+b^2=-2(a+b)-1. \label{podd x=1 eqn}\\
x=-a &: \eqref{FBU_eqn} \ \Leftrightarrow\ a^2-ab+b^2=ab(a-b). \label{podd x=-a eqn}\\
x=-a+1 &: \eqref{FBU_eqn} \ \Leftrightarrow\ a^2-ab+b^2=2(a-b)-1.  \label{podd x=-a+1 eqn}\\
x=-b &: \eqref{FBU_eqn} \ \Leftrightarrow\ a^2-ab+b^2=-ab(a-b). \label{podd x=-b eqn}\\
x=-b+1 &: \eqref{FBU_eqn} \ \Leftrightarrow\ a^2-ab+b^2=-2(a-b)-1. \label{podd x=-b+1 eqn}\\ 
x=-a-b &: \eqref{FBU_eqn} \ \Leftrightarrow\ a^2+ab+b^2=-ab(a+b). \label{podd x=-a-b eqn}\\
x=-a-b+1 &: \eqref{FBU_eqn} \ \Leftrightarrow\ a^2+ab+b^2=2(a+b)-1.  \label{podd x=-a-b+1 eqn}
\end{align}
Any two of equations among \eqref{podd x=0 eqn}, \eqref{podd x=-a eqn}, \eqref{podd x=-b eqn}, and \eqref{podd x=-a-b eqn} cannot hold simultaneously. For example, if $x=0$ and $x=-a$ are two solutions of \eqref{FBU_eqn}, then we have $ab(1-b)=0$ from \eqref{podd x=0 eqn} and \eqref{podd x=-a eqn}, which contradicts the assumption of the case that $a,b\not \in \{0,\pm1\}$. Hence \eqref{FBU_eqn} has at most one solution in $\{0,-a,-b,-a-b\}$. Hence, we obtain that \eqref{FBU_eqn} cannot have two solutions in $\{0,-a,-b,-a-b\}$. \\
We consider any two of \eqref{podd x=1 eqn}, \eqref{podd x=-a+1 eqn}, \eqref{podd x=-b+1 eqn} and \eqref{podd x=-a-b+1 eqn} hold simultaneously. 
\begin{itemize}
\item If \eqref{podd x=1 eqn} and \eqref{podd x=-a+1 eqn} holds, then we have $b=-2$ and then we have $a^2=-1$.
\item If \eqref{podd x=1 eqn} and \eqref{podd x=-b+1 eqn} holds, then we have $a=-2$ and then we have $b^2=-1$.
\item If \eqref{podd x=1 eqn} and \eqref{podd x=-a-b+1 eqn} holds, then we have $a+b=0$, which contradicts $a+ b \not \in \{0,\pm1\}$.
\item If \eqref{podd x=-a+1 eqn} and \eqref{podd x=-b+1 eqn} holds, then we have $a-b=0$, which contradicts $a- b \not \in \{0,\pm1\}$.
\item If \eqref{podd x=-a+1 eqn} and \eqref{podd x=-a-b+1 eqn} holds, then we have $a=2$ and then we have $b^2=-1$.
\item If \eqref{podd x=-a+1 eqn} and \eqref{podd x=-a-b+1 eqn} holds, then we have $b=2$ and then we have $a^2=-1$.
\end{itemize}

We can see that \eqref{FBU_eqn} has at most two solutions in $\{1,-a+1,-b+1,-a-b+1\}$, because we have $a,b=\pm2$ and hence $a+b=0$ or $a-b=0$, which contradict to the assumption that $a,b,a\pm b \not \in \{0,\pm1\}$.
We substitute the above cases to \eqref{podd x=0 eqn}, \eqref{podd x=-a eqn}, \eqref{podd x=-b eqn}, and \eqref{podd x=-a-b eqn}. We only consider the cases $a=\pm 2$ and $b^2=-1$, because the cases with $b=\pm2$ and $a^2=-1$ are similar.
\begin{itemize}
\item We substitute $a=2$, $b^2=-1$ to \eqref{podd x=0 eqn}, then we have $b=\frac 5 2$. Then $b^2=\frac{25}{4}=-1$ implies $p=29$, and then $b=\frac 5 2=17=-12$.
\item We substitute $a=-2$, $b^2=-1$ to \eqref{podd x=0 eqn}, then we have $b=\frac 1 6$. Then $b^2=\frac{1}{36}=-1$ implies $p=37$, and then $b=\frac 1 6=3-6$.
\end{itemize}
Similarly, we obtain the followings:
\begin{align*}
a=2, b^2=-1\text{ in }\eqref{podd x=-a eqn}&\  \Rightarrow\ p=37\text{ and }b=\frac 1 6 =-6.\\ 
a=-2, b^2=-1\text{ in }\eqref{podd x=-a eqn}&\  \Rightarrow\ p=29\text{ and }b=\frac 5 2  =-12.\\
a=2, b^2=-1\text{ in }\eqref{podd x=-b eqn}&\  \Rightarrow\ p=29\text{ and }b=-\frac 5 2  =12.\\
a=-2, b^2=-1\text{ in }\eqref{podd x=-b eqn}&\  \Rightarrow\ p=37\text{ and }b=-\frac 1 6 =6.\\
a=2, b^2=-1\text{ in }\eqref{podd x=-a-b eqn}&\  \Rightarrow\ p=37\text{ and }b=-\frac 1 6 =6.\\ 
a=-2, b^2=-1\text{ in }\eqref{podd x=-a-b eqn}&\  \Rightarrow\ p=29\text{ and }b=-\frac 5 2  =12.
\end{align*}
To summarize, \eqref{FBU_eqn} has three solutions in $\{0,1,-a,-a+1,-b,-b+1,-a-b,-a-b+1\}$ if $(p,a,b)\in \{(29,\pm 2, \pm 12), (37, \pm 2, \pm 6)\}$. Similarly, for the cases that $b=\pm2$ and $a^2=-1$, we can see that \eqref{FBU_eqn} has three solutions in $\{0,1,-a,-a+1,-b,-b+1,-a-b,-a-b+1\}$ if $(p,a,b)\in \{(29,\pm 12, \pm 2), (37, \pm 6, \pm 2)\}$.

If $x\not \in  \{0,1,-a,-a+1,-b,-b+1,-(a+b),-(a+b)+1\}$, then $0,1\not \in \{x,x+a,x+b,x+a+b\}$, and hence \eqref{FBU_eqn} is equivalent to $ab(2x+a+b)=0$, hence $x=-\frac{a+b}{2}$ is a solution of \eqref{FBU_eqn}. We can easily check that $-\frac{a+b}2\not\in \{0,1,-a,-a+1,-b,-b+1,-(a+b),-(a+b)+1\}$ for all 
\begin{equation}\label{podd_FBU4}
(p,a,b)\in \{(29,\pm 2, \pm 12),(29,\pm 12, \pm 2), (37, \pm 2, \pm 6),(37, \pm 6, \pm 2)\},
\end{equation}
and hence we have $\nabla_f(a,b)=4$ if and only if \eqref{podd_FBU4} holds, which completes the proof.
\end{proof}

\begin{remark} 
By \eqref{podd_nabla_f11} and Theorem \ref{Inv01_char_odd_thm}, we have $\nabla_f=3$ when $p=3$.
\end{remark}

\section{The second-order zero differential uniformity of $Inv \circ (1,\gamma)$}\label{sec_gamnot0}

\subsection{The second-order zero differential spectrum in even characteristic}
 
\begin{theorem}\label{Inv1r_char_even_thm} Let $p=2$ and $f(x) = Inv\circ(1, \gamma)$ where $\gamma\in \Fbn\setminus\{0,1\}$. If $ab(a+b)\ne 0$, then we have
\begin{equation*}
\nabla_f(a,b)=
\begin{cases}
8, &\text{if }1\in\{a,b,a+b\}, (\{a,b\}\setminus\{1\})\subset S_{\gamma,1},\\
&\gamma\in\{a,b,a+b\}, (\{a,b\}\setminus\{\gamma\})\subset S_{\gamma,2},\\
&\frac b a \in\F_4\setminus \F_2\text{ and }a^3\in \{\gamma+1,\gamma^3+\gamma^2\}, \{a,b,a+b\}\cap\{1,\gamma\}=\emptyset\\
4,&\text{if }\gamma \in \F_4\setminus\F_2, a,b\in\F_4^*, a\ne b,\\
&1\in\{a,b,a+b\}, (\{a,b\}\setminus\{1\})\subset S_{\gamma,3}\cup S_{\gamma,4}\\
&\gamma\in\{a,b,a+b\}, (\{a,b\}\setminus\{\gamma\})\subset S_{\gamma,5}\cup S_{\gamma,6}\\
&\frac b a \not\in \F_4, (a,b)\in S_{\gamma,7}\cup S_{\gamma,8}, \{a,b,a+b\}\cap\{1,\gamma\}=\emptyset,
\\
&\frac b a \in \F_4\setminus\F_2, (a,b)\not\in S_{\gamma,7}\cup S_{\gamma,8}, \{a,b,a+b\}\cap\{1,\gamma\}=\emptyset,\\
0,& \text{otherwise.}
\end{cases}
\end{equation*}
where
\begin{equation*}
\begin{array}{ll}
S_{\gamma, 1}=\{c\in \F_8 \setminus \F_2 : \gamma = c(c+1)\},  & S_{\gamma, 2}=\{c\in \F_8 \setminus \F_2 : \gamma = c^2/(c+1)\},\\
S_{\gamma,3}=\{c\in\Fbn\setminus \F_8 : \gamma=c(c+1)\}, & S_{\gamma, 4}=\{c\in \Fbn \setminus \F_8 : \gamma^3+\gamma^2+(c^2+c+1)\gamma=1\},\\
S_{\gamma, 5}=\{c\in \Fbn \setminus \F_8 : \gamma = c^2/(c+1)\}, & S_{\gamma, 6}=\{c\in \Fbn \setminus \F_8 : \gamma^3+\gamma^2+(c+1)\gamma+c^2=1\},\\
S_{\gamma,7}=\{(a,b) \in \Fbnmul\times \Fbnmul : G_{\gamma,a,b}=\gamma+1\}, & S_{\gamma,8}=\{(a,b) \in \Fbnmul\times \Fbnmul : G_{\gamma,a,b}=\gamma^2(\gamma+1)\}\\
\end{array}
\end{equation*}
and $G_{\gamma,a,b}=ab(a+b)+(a^2+ab+b^2)(\gamma+1)$.
\end{theorem}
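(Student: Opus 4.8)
The plan is to exploit that $f$ agrees with $Inv$ everywhere except at the two points $1$ and $\gamma$, where $f(1)=1/\gamma$ and $f(\gamma)=1$ (while $f(0)=0$, as for $Inv$). Fix $a,b$ with $ab(a+b)\ne 0$ and put $V=\{0,a,b,a+b\}$, a two–dimensional $\F_2$–subspace of $\Fbn$. Since $x$ solves \eqref{FBU_eqn} if and only if $x+v$ does for every $v\in V$, the solution set is a union of $V$–cosets, so $\nabla_f(a,b)=4m$, where $m$ counts the cosets $C$ with $\sum_{y\in C}f(y)=0$. The key elementary fact is $\prod_{v\in V}(y+v)=y^4+(a^2+ab+b^2)y^2+ab(a+b)\,y$, whose formal derivative is the nonzero constant $ab(a+b)$; hence whenever $0\notin C=x_0+V$ one gets $\sum_{y\in C}1/y=ab(a+b)/\prod_{y\in C}y\ne 0$. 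Therefore a coset disjoint from $\{0,1,\gamma\}$ never satisfies the equation, so only the (at most three) cosets through $0$, $1$ and $\gamma$ can contribute, and $m\le 3$.

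The second step reduces, for each relevant coset, the equation $\sum_{y\in C}f(y)=0$ to a polynomial identity: substitute the special value(s) of $f$ and clear denominators using the product formula above evaluated at a base point of $C$. Writing $D=ab(a+b)$ and $N=a^2+ab+b^2$, one finds: a coset meeting $\{0,1,\gamma\}$ in exactly $\{0\}$ contributes iff $N=0$ (iff $b/a\in\F_4\setminus\F_2$); in exactly $\{1\}$, iff $G_{\gamma,a,b}=\gamma+1$ (iff $(a,b)\in S_{\gamma,7}$); in exactly $\{\gamma\}$, iff $G_{\gamma,a,b}=\gamma^2(\gamma+1)$ (iff $(a,b)\in S_{\gamma,8}$); in $\{0,1\}$ (which forces $1\in\{a,b,a+b\}$, so after relabelling $a=1$), iff $\gamma=b(b+1)$; in $\{0,\gamma\}$ (forces $\gamma\in\{a,b,a+b\}$, relabel $a=\gamma$), iff $\gamma=b^2/(b+1)$; in $\{1,\gamma\}$, the equation becomes $D=0$, impossible, so such a coset never contributes; and if the coset is all of $\{0,1,\gamma,\gamma+1\}$ (forcing $\{a,b,a+b\}=\{1,\gamma,\gamma+1\}$), iff $\gamma^2+\gamma+1=0$.

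The third step is the case analysis on the position of $1,\gamma,\gamma+1$ relative to $\{a,b,a+b\}$. If $\{a,b,a+b\}\cap\{1,\gamma\}=\emptyset$, the cosets through $0,1,\gamma$ contribute by the first three conditions above; using $S_{\gamma,7}\cap S_{\gamma,8}=\emptyset$ (else $\gamma^2=1$), the identity $G_{\gamma,a,b}=D=a^3$ when $N=0$ (so then $(a,b)\in S_{\gamma,7}\cup S_{\gamma,8}$ iff $a^3\in\{\gamma+1,\gamma^3+\gamma^2\}$, and the two cannot hold together), and $G_{\gamma,a,b}=(\gamma+1)^3\notin\{\gamma+1,\gamma^2(\gamma+1)\}$ whenever $\gamma+1\in\{a,b,a+b\}$, one recovers exactly the ``generic'' rows and in particular $\nabla_f(a,b)\le 8$. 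If instead $1\in\{a,b,a+b\}$ (relabel $a=1$, $b\notin\{0,1,\gamma,\gamma+1\}$), the coset through $0$ also contains $1$ and contributes iff $\gamma=b(b+1)$, while the coset through $\gamma$ contributes iff $G_{\gamma,1,b}=\gamma^2(\gamma+1)$, which simplifies to the cubic $\gamma^3+\gamma^2+(b^2+b+1)\gamma=1$; one shows these two conditions coincide precisely when $b\in\F_8\setminus\F_2$ (giving value $8$ and the set $S_{\gamma,1}$), are otherwise independent with their solutions among $b\notin\F_8$ giving $S_{\gamma,3},S_{\gamma,4}$ and value $4$, and cannot both hold otherwise (value $0$). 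The case $\gamma\in\{a,b,a+b\}$, $1\notin\{a,b,a+b\}$ is entirely parallel (relabel $a=\gamma$), producing $S_{\gamma,2},S_{\gamma,5},S_{\gamma,6}$; and the overlap $\{a,b,a+b\}=\{1,\gamma,\gamma+1\}$ gives $4$ iff $\gamma\in\F_4\setminus\F_2$ and $0$ otherwise.

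I expect the main obstacle to be the bookkeeping in the subfield cases rather than any single algebraic manipulation. Turning ``$b\in\F_8\setminus\F_2$'' into the statement that the two competing coset-conditions coincide requires the fact that $c\mapsto c^2+c$ maps $\F_8\setminus\F_2$ onto the three roots of $x^3+x+1$, together with the identity $(\gamma+1)^3=\gamma^2$ (hence $(\gamma+1)^3/\gamma=\gamma$) for $\gamma$ satisfying $\gamma^3+\gamma+1=0$, and the symmetric statements involving $x^3+x^2+1$ for the coset-through-$1$ analysis; one must verify these coincidences are exact, so that no coset configuration yields $\nabla_f(a,b)=12$ and so that the six auxiliary sets $S_{\gamma,1},\dots,S_{\gamma,6}$ are precisely the ones in the statement. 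Matching every branch of the displayed formula against the coset count, not any isolated computation, is where the bulk of the work lies.
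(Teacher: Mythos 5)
Your proposal is correct and follows essentially the same route as the paper: solutions of \eqref{FBU_eqn} are confined to the three cosets of $V=\{0,a,b,a+b\}$ through $0$, $1$ and $\gamma$, and the contribution conditions you extract ($a^2+ab+b^2=0$, $G_{\gamma,a,b}=\gamma+1$, $G_{\gamma,a,b}=\gamma^2(\gamma+1)$, $\gamma=b(b+1)$, $\gamma=b^2/(b+1)$, $\gamma^2+\gamma+1=0$) are exactly the paper's equations, with the same case split according to whether $1$ or $\gamma$ lies in $\{a,b,a+b\}$. Your packaging via the coset-sum identity $\sum_{y\in C}1/y=ab(a+b)/\prod_{y\in C}y$ is a slightly cleaner way to rule out all other cosets, and it has the added merit of explicitly handling the configuration $\gamma+1\in\{a,b,a+b\}$ (where the cosets through $1$ and $\gamma$ coincide), a sub-case the paper's Case 3 passes over silently; your observation that $G_{\gamma,a,b}=(\gamma+1)^3\notin\{\gamma+1,\gamma^2(\gamma+1)\}$ there confirms the stated formula remains consistent.
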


\begin{proof}  Our proof is divided into three cases, $1 \in \{a,b,a+b\}$, $\gamma \in \{a,b,a+b\}$ and $1,\gamma \not \in \{a,b,a+b\}$.

\noindent (Case 1) First, we consider the case $1 \in \{a,b,a+b\}$. Let $a=1$ and $b\ne 0,1$. Then, \eqref{FBU_eqn} is equivalent to 
\begin{equation}\label{Inv1r_char2_a1_eqn}
f(x)+f(x+1)+f(x+b)+f(x+b+1)=0.
\end{equation}
If $b\in\{\gamma,\gamma+1\}$, then \eqref{Inv1r_char2_a1_eqn} is equivalent to $f(x)+f(x+1)+f(x+\gamma)+f(x+\gamma+1)=0$.
If $x\in \{0,1,\gamma,\gamma+1\}$, then $f(x)+f(x+1)+f(x+\gamma)+f(x+\gamma+1)=0$ is equivalent to $\gamma^2+\gamma+1=0$. 
If $x\not \in \{0,1,\gamma,\gamma+1\}$, then $f(x)+f(x+1)+f(x+\gamma)+f(x+\gamma+1)=0$ is equivalent to $\gamma^2+\gamma=0$, which contradicts the assumption that $\gamma\not\in \{0,1\}$. \\
Assume that $b\not\in\{\gamma,\gamma+1\}$. 
If $x\in \{0,1,b,b+1\}$, then \eqref{Inv1r_char2_a1_eqn} is equivalent to $\gamma=b(b+1)$. 
If $x\in \{\gamma,\gamma+1,\gamma+b,\gamma+b+1\}$, then \eqref{Inv1r_char2_a1_eqn} is equivalent to $\gamma(\gamma^3+\gamma^2+(b^2+b+1)\gamma+1)=0$. 
If $x\not\in \{0,1,b,b+1\}\cup\{\gamma,\gamma+1,\gamma+b,\gamma+b+1\}$, \eqref{Inv1r_char2_a1_eqn} is equivalent to $b(b+1)=0$, which contradicts $b\ne0,1$. \eqref{Inv1r_char2_a1_eqn} has 8 solutions $\{0,1,b,b+1\}\cup\{\gamma,\gamma+1,\gamma+b,\gamma+b+1\}$ if and only if $\gamma=b(b+1)$ and $\gamma^3+\gamma^2+(b^2+b+1)\gamma+1=0$ hold, which implies that $b^6 + b^5 + b^4 + b^3 + b^2 + b + 1=0$ which is equivalent to $b\in \F_8 \setminus \F_2$. So, we have the following:
\begin{equation*}
\nabla_f(1,b)=
\begin{cases}
8, &\text{ if }\gamma=b(b+1)\text{ where }b\in \F_8 \setminus \F_2\\
4, &\text{ if }\gamma=b(b+1)\text{ or }\gamma^3+\gamma^2+(b^2+b+1)\gamma=1\text{ where }b\not\in \F_8\\
&\text{ or }b,\gamma\in\F_4 \setminus \F_2\\
0, &\text{ otherwise.}
\end{cases}
\end{equation*}
Note that the cases $b=1$ and $a+b=1$ can be similarly analyzed with the case $a=1$, and we summarize the cases $1\in \{a,b,a+b\}$ as follows :
\begin{equation*}
\nabla_f(a,b)=
\begin{cases}
8, &\text{ if }(\{a,b\}\setminus \{1\}) \subset S_{\gamma,1},\\
4, &\text{ if } (\{a,b\}\setminus\{1\})\subset S_{\gamma,3}\cup S_{\gamma,4}\\
&\text{ or }b,\gamma\in\F_4\setminus \F_2\\
0, &\text{ otherwise.}
\end{cases}
\end{equation*}

\noindent (Case 2)  Next, we consider the case $\gamma \in \{a,b,a+b\}$. Let $a=\gamma$ and $b\ne 0,\gamma$. Then, \eqref{FBU_eqn} is equivalent to 
\begin{equation}\label{Inv1r_char2_ar_eqn}
f(x)+f(x+\gamma)+f(x+b)+f(x+b+\gamma)=0.
\end{equation}
If $b\in\{1,\gamma+1\}$, then \eqref{Inv1r_char2_ar_eqn} is equivalent to $f(x)+f(x+1)+f(x+\gamma)+f(x+\gamma+1)=0$. Similarly with the case $a=1$, $f(x)+f(x+1)+f(x+\gamma)+f(x+\gamma+1)=0$ has 4 solutions $0,1,\gamma$ and $\gamma+1$ if $\gamma\in \F_4\setminus \F_2$, and no solution otherwise. \\
Assume that $b\not\in\{1,\gamma+1\}$. 
If $x\in \{0,\gamma,b,b+\gamma\}$, then \eqref{Inv1r_char2_ar_eqn} is equivalent to $(b+1)\gamma+b^2=0$. 
If $x\in \{1,\gamma+1,b+1,b+\gamma+1\}$, then \eqref{Inv1r_char2_ar_eqn} is equivalent to $\gamma^3+\gamma^2+(b+1)\gamma+b^2+1=0$.
If $x\not \in \{0,\gamma,b,b+\gamma\}\cup\{1,\gamma+1,b+1,b+\gamma+1\}$, then \eqref{Inv1r_char2_ar_eqn} is equivalent to $b\gamma(\gamma+b)=0$, which contradicts the assumption that $b\ne 0,\gamma$. \eqref{Inv1r_char2_ar_eqn} has 8 solutions $\{0,\gamma,b,b+\gamma\}\cup\{1,\gamma+1,b+1,b+\gamma+1\}$ if and only if $\gamma=b^2/(b+1)$ and $\gamma^3+\gamma^2+(b+1)\gamma+b^2+1=0$ which implies that $b^6 + b^5 + b^4 + b^3 + b^2 + b + 1=0$ which is equivalent to $b\in \F_8 \setminus \F_2$. Similarly with (Case 1), the cases $b=\gamma$ and $a+b=\gamma$ can be similarly analyzed with the case $a=\gamma$. So, we have the following:
\begin{equation*}
\nabla_f(\gamma,b)=
\begin{cases}
8, &\text{ if }(\{a,b\}\setminus \{\gamma\}) \subset S_{\gamma,2},\\
4, &\text{ if } (\{a,b\}\setminus\{\gamma\})\subset S_{\gamma,5}\cup S_{\gamma,6}\\
&\text{ or }b,\gamma\in\F_4\setminus \F_2\\
0, &\text{ otherwise.}
\end{cases}
\end{equation*}

\noindent (Case 3) Now we assume that $1,\gamma \not \in \{a,b,a+b\}$. 
\begin{align}
x\in \{0,a,b,a+b\} &: \eqref{FBU_eqn} \ \Leftrightarrow\  a^2+ab+b^2=0. \label{Inv1r_char2 x=0 eqn}\\
x\in \{1,a+1,b+1,a+b+1\} &: \eqref{FBU_eqn} \ \Leftrightarrow\ (a^2+ab+b^2+1)\gamma+1=a^2b+ab^2+a^2+ab+b^2. \label{Inv1r_char2 x=1 eqn}\\
x\in \{\gamma,a+\gamma,b+\gamma,a+b+\gamma\} &: \eqref{FBU_eqn} \ \Leftrightarrow\ \gamma^3+\gamma^2+(a^2+ab+b^2)\gamma=a^2b+ab^2+a^2+ab+b^2. \label{Inv1r_char2 x=r eqn}
\end{align}
If $x\not\in \{0,a,b,a+b\}\cup\{1,a+1,b+1,a+b+1\}\cup\{\gamma,a+\gamma,b+\gamma,a+b+\gamma\}$, then \eqref{FBU_eqn} is equivalent to $ab(a+b)=0$, which implies trivial cases. So, \eqref{FBU_eqn} has at most 12 solutions in $\{0,a,b,a+b\}\cup\{1,a+1,b+1,a+b+1\}\cup\{\gamma,a+\gamma,b+\gamma,a+b+\gamma\}$. Now we analyze that  \eqref{FBU_eqn} has at least 8 solutions. \\
(i) Substituting \eqref{Inv1r_char2 x=0 eqn} to \eqref{Inv1r_char2 x=1 eqn}, we have $\gamma+1=a^2b+ab^2$. By \eqref{Inv1r_char2 x=0 eqn}, we have $\frac b a \in \F_4 \setminus \F_2$ and hence we obtain $\gamma+1=a^3=b^3$.\\
(ii) Substituting \eqref{Inv1r_char2 x=0 eqn} to \eqref{Inv1r_char2 x=r eqn}, we have $\gamma^3+\gamma^2=a^2b+ab^2$. Similarly with (i) we obtain $\frac b a \in \F_4 \setminus \F_2$ and hence $\gamma^3+\gamma^2=a^3=b^3$.\\
(iii) Adding \eqref{Inv1r_char2 x=1 eqn} and \eqref{Inv1r_char2 x=r eqn}, we have $(\gamma+1)^3=0$, 	
which contradicts the assumption that $\gamma\ne 1$.  Hence \eqref{Inv1r_char2 x=1 eqn} and \eqref{Inv1r_char2 x=r eqn} cannot hold simultaneously.

By (iii), \eqref{FBU_eqn} cannot have $12$ solutions, and hence $\nabla_f(a,b)\in \{0,4,8\}$. By (i) and (ii), we have $\nabla_f(a,b)=8$ if and only if $\frac b a \in \F_4\setminus \F_2$ and $a^3=b^3=\gamma+1$ or $a^3=b^3= \gamma^3+\gamma^2$ with $1,\gamma \not \in \{a,b,a+b\}$. Moreover, we have $\nabla_f(a,b)=4$ if only one of \eqref{Inv1r_char2 x=0 eqn}, \eqref{Inv1r_char2 x=1 eqn} and \eqref{Inv1r_char2 x=r eqn} holds. Hence, $\nabla_f(a,b)=4$ if and only if  $\frac b a \in \F_4 \setminus \F_2$ and both \eqref{Inv1r_char2 x=1 eqn} and \eqref{Inv1r_char2 x=r eqn} do not hold, or $\frac b a \not\in \F_4 \setminus \F_2$ and one of \eqref{Inv1r_char2 x=1 eqn} and \eqref{Inv1r_char2 x=r eqn} holds. 

Therefore, under the assumption  that $1,\gamma \not \in \{a,b,a+b\}$, we have
\begin{equation*}
\nabla_f(a,b)=
\begin{cases}
8, &\text{ if }\frac b a \in\F_4\setminus \F_2, a^3\in \{\gamma+1,\gamma^3+\gamma^2\}, \\
4, &\text{ if }\frac b a \not\in \F_4, (a,b)\in S_{\gamma,7}\cup S_{\gamma,8},\\
&\text{ or }\frac b a \in \F_4\setminus\F_2, (a,b)\not\in S_{\gamma,7}\cup S_{\gamma,8}, \\
0, &\text{ otherwise.}
\end{cases}
\end{equation*}
We complete the proof.
\end{proof}

Next, we further investigate the second-order zero differential spectrum of $f=Inv \circ (1,\gamma)$. We denote
\begin{equation*}
\begin{array}{ll}
A_4=\{1\}\cup S_{\gamma,1},\  &S_4 =\{(a,b)\in A_4\times A_4 : a\ne b\}, \\
A_5=\{\gamma\}\cup S_{\gamma,2},\ &S_5 =\{(a,b)\in A_5\times A_5 : a\ne b\}.
\end{array}
\end{equation*}
By Theorem \ref{Inv1r_char_even_thm}, we have $\nabla_f(a,b)=8$ if $(a,b)\in S_4 \cup S_5$ when $\gamma\in \F_8$.
By Lemma \ref{quad lemma}, $x^2+x+\gamma=0$ has two solutions in $\F_8 \setminus \F_2$ if and only if $\tr(\gamma)=0$, and $x^2+\gamma x+\gamma=0$ has two solutions in $\F_8 \setminus \F_2$ if and only if $\tr\left( \frac1{\gamma}\right)=0$. When $\gamma\in\F_8\setminus \F_2$, we have $\tr\left(\frac{1}{\gamma}\right)=\tr(\gamma^3)$ and 
\begin{equation}\label{F8_trace_r}
\tr\left(\gamma+\frac{1}{\gamma}\right)=\tr(\gamma+\gamma^3) = (\gamma+\gamma^2+\gamma^{2^2})+(\gamma^3+\gamma^{3\cdot 2}+\gamma^{3\cdot 2^2})
=\gamma+\gamma^2+\gamma^3+\gamma^4+\gamma^5+\gamma^6= 1.
\end{equation}
So, we have $\#A_4=3$ and $\#A_5=0$, or $\#A_4=0$ and $\#A_5=3$, and hence $\#(S_4 \cup S_5)=6$.

When $2\mid n$, we have $\nabla_f(a,b)=8$ if $(a,b)\in S_6 \cup S_7$ where
\begin{align*}
A_6=\{a\in \Fbnmul : a^3=\gamma+1\},\  &S_6 =\{(a,b)\in A_6\times A_6 : a\ne b, \}, \\
A_7=\{a\in \Fbnmul : a^3=\gamma^3+\gamma^2\},\ &S_7 =\{(a,b)\in A_7\times A_7 : a\ne b, \}.
\end{align*}
We consider the cases that $(a,b)\in S_6 \cup S_7$ and $\{a,b,a+b\}\cap\{1,\gamma\}\ne\emptyset$. We assume that $a\in \{1,\gamma\}$, and the cases $b, a+b\in \{1,\gamma\}$ are similar. If $a=1$, then we have $\gamma^3+\gamma^2=1$, since $\gamma\ne0$. If $a=\gamma$, then $\gamma+1=\gamma^3$, , since $\gamma\ne0$. Note that $\gamma\in \F_8\setminus \F_2$ if and only if $\gamma^3+\gamma+1=0$ or $\gamma^3+\gamma^2+1=0$. Therefore, we have $\omega_8 = 12$ when $6\mid n$ and $\gamma\in\F_8\setminus \F_2$. 

If $\gamma\not \in \F_8$, then we have $\nabla_f(a,b)=8$ for all $(a,b)\in  S_6\cup S_7$. And we also have $\#(S_6\cup S_7) = 6\cdot \#(C_n \cap \{\gamma+1, \gamma^3+\gamma^2\})$, where 
\begin{equation}\label{set of cubic}
C_n=\{\alpha^3 : \alpha \in \Fbn\setminus \F_4\} 
\end{equation}
is the set of all cubic elements in $\Fbn\setminus\{0,1\}$.

Therefore, we have the following
\begin{equation}\label{Inv1r_DS8}
\omega_8 =
\begin{cases}
12, &\text{ if }6\mid n, \gamma\in\F_8\setminus \F_2\\
6, &\text{ if }3\mid n, 2\nmid n, \gamma\in\F_8\setminus \F_2\\
6\cdot \#(C_n \cap \{\gamma+1, \gamma^3+\gamma^2\}), &\text{ if } 2\mid n, \gamma\not\in\F_8,\\
0, &\text{ otherwise,}
\end{cases}
\end{equation}
where $C_n$ is given in \eqref{set of cubic}. In particular, we have the following corollary from \eqref{Inv1r_DS8}.

\begin{corollary} Let $p=2$ and $f(x) = Inv\circ(1,\gamma)$ where $\gamma \in \Fbn\setminus \F_2$. Then,
\begin{equation*}
\nabla_f=
\begin{cases}
8, &\text{ if }2\mid n, \{\gamma^3+\gamma^2,\gamma+1\}\cap C_n\ne \emptyset,\\
& \ \ \ \ 3\mid n, \gamma\in \F_8\setminus \F_2,\\
4, &\text{ otherwise,}
\end{cases}
\end{equation*}
where $C_n$ is given in \eqref{set of cubic}. In particular, we have $\nabla_f=4$ for all $\gamma\in \Fbn$, if $n$ is odd with $3\nmid n$. 
\end{corollary}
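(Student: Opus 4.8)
The plan is to read the corollary off Theorem~\ref{Inv1r_char_even_thm} together with the count \eqref{Inv1r_DS8}, the only extra ingredient being that $f$ is never APN, so that $\omega_8=0$ forces $\nabla_f=4$ rather than $0$. First I would record that by Theorem~\ref{Inv1r_char_even_thm} one has $\nabla_f(a,b)\in\{0,4,8\}$ for every pair with $ab(a+b)\ne 0$; hence $\nabla_f\le 8$, and $\nabla_f=8$ exactly when $\omega_8>0$. Since $\omega_8$ is given by \eqref{Inv1r_DS8}, the ``$\nabla_f=8$'' half is bookkeeping: its three nontrivial lines, namely $[\,6\mid n,\ \gamma\in\F_8\setminus\F_2\,]$, $[\,3\mid n,\ 2\nmid n,\ \gamma\in\F_8\setminus\F_2\,]$ and $[\,2\mid n,\ \gamma\notin\F_8,\ \{\gamma+1,\gamma^3+\gamma^2\}\cap C_n\ne\emptyset\,]$, must be shown to union to the condition stated. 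The first two merge into $[\,3\mid n,\ \gamma\in\F_8\setminus\F_2\,]$ (using $\F_8\subseteq\Fbn\iff 3\mid n$, so such $\gamma$ always yield $\omega_8>0$), which is the second line of the corollary.

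The subtle point is matching the corollary's first line, ``$2\mid n$ and $\{\gamma^3+\gamma^2,\gamma+1\}\cap C_n\ne\emptyset$'', with the extra restriction $\gamma\notin\F_8$ visible in \eqref{Inv1r_DS8}. I would show that when $2\mid n$ and $\gamma\in\F_8\setminus\F_2$ (so $6\mid n$) one automatically has $\gamma+1\in C_n$: indeed $\gamma$ has multiplicative order $7$, so $\gamma+1=\gamma^{j}$ for some $1\le j\le 6$, and since $\gcd(3,7)=1$ we may write $\gamma^{j}=(\gamma^{j'})^{3}$ with $1\le j'\le 6$; then $\gamma^{j'}\in\F_8\setminus\F_2\subseteq\Fbn\setminus\F_4$ (because $\F_8\cap\F_4=\F_2$ and $6\mid n$), whence $\gamma+1\in C_n$. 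Thus the corollary's first line already contains the case $6\mid n$, $\gamma\in\F_8\setminus\F_2$, so the union of the three lines of \eqref{Inv1r_DS8} is precisely the stated condition for $\nabla_f=8$.

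For the ``$\nabla_f=4$ otherwise'' half I must exhibit, whenever $\omega_8=0$, one pair $(a,b)$ with $ab(a+b)\ne 0$ and $\nabla_f(a,b)=4$, and I split on the parity of $n$. If $n$ is even, then by the reconciliation above $\omega_8=0$ is equivalent to $\{\gamma+1,\gamma^3+\gamma^2\}\cap C_n=\emptyset$; fix $\omega\in\F_4\setminus\F_2\subseteq\Fbn$ and set $b=\omega a$. A short computation gives $a^2+ab+b^2=0$ and $G_{\gamma,a,b}=a^3$, so $(a,b)\in S_{\gamma,7}\cup S_{\gamma,8}$ forces $a^3\in\{\gamma+1,\gamma^3+\gamma^2\}$, which is impossible for $a\notin\F_4$; choosing $a\in\Fbn\setminus(\F_4\cup\gamma\F_4^{*})$, possible for $n\ge 4$ since $2^n>7$, we obtain $\tfrac ba\in\F_4\setminus\F_2$, $(a,b)\notin S_{\gamma,7}\cup S_{\gamma,8}$ and $\{a,b,a+b\}\cap\{1,\gamma\}=\emptyset$, so the last branch of Case~3 of Theorem~\ref{Inv1r_char_even_thm} gives $\nabla_f(a,b)=4$; the case $n=2$ is immediate since then $\gamma\in\F_4\setminus\F_2$ and $\nabla_f(1,\gamma)=4$ by Case~1. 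If $n$ is odd, then $\omega_8=0$ is equivalent to $\gamma\notin\F_8$; by Lemma~\ref{quad lemma}, $b^2+b+\gamma=0$ is solvable iff $\tr(\gamma)=0$ and $b^2+\gamma b+\gamma=0$ is solvable iff $\tr(\gamma^{-1})=0$, and in either case a root lies outside $\{0,1\}$ and outside $\F_8$ (else $\gamma\in\F_8$), giving $\nabla_f(1,b)=4$ (via $S_{\gamma,3}$) or $\nabla_f(\gamma,b)=4$ (via $S_{\gamma,5}$); when $\tr(\gamma)=\tr(\gamma^{-1})=1$ one computes $\tr\bigl((1+\gamma)^{3}\gamma^{-1}\bigr)=\tr(\gamma^{-1})+\tr(1)=0$ (as $n$ is odd), so $b^2+b+(1+\gamma)^{3}\gamma^{-1}=0$ has a root, which — provided it avoids $\F_8$ — gives $\nabla_f(1,b)=4$ through $S_{\gamma,4}$.

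The main obstacle is exactly this last configuration: certifying $\omega_4>0$ (equivalently, that $f$ is not APN) for every odd $n$ with $\gamma\notin\F_8$, in particular in the degenerate subcase $\tr(\gamma)=\tr(\gamma^{-1})=1$ where the root of $b^2+b+(1+\gamma)^{3}\gamma^{-1}=0$ happens to fall in $\F_8$. To close it I would fall back on Case~3 of Theorem~\ref{Inv1r_char_even_thm}: for each $a$ the defining relation of $S_{\gamma,7}$ is the quadratic $(a+\gamma+1)b^2+\bigl(a^2+(\gamma+1)a\bigr)b+(\gamma+1)(a^2+1)=0$ in $b$, solvable precisely when a certain trace of a rational function of $a$ vanishes; a counting argument over $a\in\Fbn$ (discarding the finitely many $a$ for which the resulting $b$ would equal one of $0,1,a,\gamma,a+1,a+\gamma$, or the pair would fall in $S_{\gamma,8}$, or $\{a,b,a+b\}$ would meet $\{1,\gamma\}$) produces an admissible pair with $\nabla_f(a,b)=4$ once $n$ is large, the few small $n$ being checked directly. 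By contrast, the even-$n$ case and the whole ``$\nabla_f=8$'' half are routine once the order-$7$/cube reconciliation of the second paragraph is in place.
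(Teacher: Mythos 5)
Your $\nabla_f=8$ half is correct and is essentially the paper's own route: the corollary is read off the count \eqref{Inv1r_DS8} of $\omega_8$, using that Theorem~\ref{Inv1r_char_even_thm} only permits the values $0,4,8$ on pairs with $ab(a+b)\ne 0$. Your order-$7$ observation that $6\mid n$ and $\gamma\in\F_8\setminus\F_2$ force $\gamma+1\in C_n$ is a correct (and more explicit than the paper's) reconciliation of the corollary's first line with the restriction $\gamma\notin\F_8$ in \eqref{Inv1r_DS8}; alternatively that overlap has $3\mid n$ and is absorbed by the corollary's second line. Your even-$n$ witness $b=\omega a$ with $\omega\in\F_4\setminus\F_2$ and $a\notin\F_4\cup\gamma\F_4^{*}$ for $\omega_4>0$ is also fine. (The paper does not argue $\omega_4>0$ at the point the corollary is stated; it is supplied by the spectrum computations that follow, so your extra care here is appropriate, but it must then be completed.)

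The genuine gap is the odd-$n$ subcase $\tr(\gamma)=\tr(1/\gamma)=1$: your witness via $S_{\gamma,4}$ needs the root $b$ of $b^2+b+(1+\gamma)^3\gamma^{-1}=0$ to avoid $\F_8$, and you only stipulate ``provided it avoids $\F_8$''. This cannot simply be waved away: such a root lies in $\F_8\cap\Fbn\setminus\F_2$, which forces $3\mid n$, and then $(1+\gamma)^3/\gamma\in\F_8$ makes $\gamma$ a root of a cubic over $\F_8$, so $\gamma\in\F_{2^9}\setminus\F_8$ and $9\mid n$; for those $n$ your argument proves nothing, and the fallback you sketch (a trace-counting argument ``once $n$ is large'', small $n$ ``checked directly'') is not carried out — no bound, no count, no verification. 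A short unconditional fix, essentially contained in the paper's own computation, is: for odd $n$ one has $\F_4\cap\Fbn=\F_2$, so every pair of $(S_{\gamma,7}\cup S_{\gamma,8})\setminus S$ meets the branch ``$\frac{b}{a}\notin\F_4$, $(a,b)\in S_{\gamma,7}\cup S_{\gamma,8}$, $\{a,b,a+b\}\cap\{1,\gamma\}=\emptyset$'' of Theorem~\ref{Inv1r_char_even_thm} and hence gives $\nabla_f(a,b)=4$; since exactly one of \eqref{Sr7_trace}, \eqref{Sr8_trace} holds for each of the $2^{n-1}-1$ admissible $a$ with $\tr\bigl(\frac{\gamma+1}{a+\gamma+1}\bigr)=1$, discarding $a\in\{1,\gamma\}$ still leaves at least $2(2^{n-1}-3)$ pairs in $S_{\gamma,7}\cup S_{\gamma,8}$, of which at most $16$ lie in $S$ by the paper's count of $\#\bigl(S\cap(S_{\gamma,7}\cup S_{\gamma,8})\bigr)$, so a valid pair exists for every odd $n\ge 5$; the case $n=3$ is vacuous here because $\Fbn=\F_8$ forces $\gamma\in\F_8$. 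Until something of this kind is added, the ``otherwise $\nabla_f=4$'' branch remains unproved exactly for odd $n$ with $9\mid n$ and $\tr(\gamma)=\tr(1/\gamma)=1$.
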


Next we consider to compute $\omega_4$. 

\bigskip
\noindent (Case 1) When $n$ is even, there are 6 pairs $(a,b)$ such that $a,b\in\F_4^*$, $a\ne b$ for $\gamma\in \F_4\setminus \F_2$.

\bigskip
\noindent (Case 2) We consider to compute the number of pairs $(a,b)$ such that 
\begin{align*}
1\in\{a,b,a+b\},&\  (\{a,b\}\setminus\{1\})\subset S_{\gamma,3}\cup S_{\gamma,4}, \text{ or }\\
\gamma\in\{a,b,a+b\},&\ (\{a,b\}\setminus\{\gamma\})\subset S_{\gamma,5}\cup S_{\gamma,6}.
\end{align*}
We denote
\begin{align*}
A_8&=\{1\}\cup S_{\gamma,3},\  A_9=\{1\}\cup S_{\gamma,4},\ A_{10}=\{\gamma\}\cup S_{\gamma,5},\ A_{11}=\{\gamma\}\cup S_{\gamma,6},\\
S_k  &=\{(a,b)\in A_k\times A_k : a\ne b\},\ k=8,9,10,11.
\end{align*}
By Theorem \ref{Inv1r_char_even_thm}, we have $\nabla_f(a,b)=4$ if $(a,b)\in S_8 \cup S_9 \cup S_{10}\cup S_{11}$. By Lemma \ref{quad lemma}, 
\begin{equation*}
\begin{array}{ll}
\#S_{\gamma,3}=2\ \Leftrightarrow \ \tr(\gamma)=0,\ & \ \#S_{\gamma,4} = 2\ \Leftrightarrow \ \tr(1/\gamma)=\tr(1),\\
\#S_{\gamma,5}=2\ \Leftrightarrow \ \tr(1/\gamma)=0,\ & \ \#S_{\gamma,6} = 2\ \Leftrightarrow \ \tr(\gamma)=\tr(1),
\end{array}
\end{equation*}
assuming $\gamma \in\Fbn \setminus \F_8$. If $n$ is even, we have
\begin{equation*}
\#(S_8\cup S_{11}) = 
\begin{cases}
12, & \text{ if }\tr(\gamma)=0,\\
0, & \text{ if }\tr(\gamma)=1,
\end{cases}
 \ \ \ \#(S_9\cup S_{10}) = 
\begin{cases}
12, & \text{ if }\tr(1/\gamma)=0,\\
0, & \text{ if }\tr(1/\gamma)=1.
\end{cases}
\end{equation*}
If $n$ is odd, then we have $\#(S_8\cup S_{11}) =  \#(S_9\cup S_{10})=6$. Note that $\#(S_8\cup S_9\cup S_{10}\cup S_{11})=0$ when $\gamma\in \F_8$.

\bigskip
\noindent (Case 3) We consider to compute the number of pairs $(a,b)$ such that
$$\frac b a \in \F_4\setminus\F_2, (a,b)\not\in S_{\gamma,7}\cup S_{\gamma,8}, \{a,b,a+b\}\cap\{1,\gamma\}=\emptyset.$$
 If $\frac b a \in \F_4 \setminus \F_2$ then \eqref{Inv1r_char2 x=1 eqn} is equivalent to $a^3=b^3=\gamma+1$, and \eqref{Inv1r_char2 x=r eqn} is equivalent to $a^3=b^3=\gamma^2(\gamma+1)$. So, there are $6\cdot \#(C_n \setminus \{\gamma^3,\gamma+1, \gamma^3+\gamma^2\})=6[(2^n-1)/3 -1] -6\cdot \#(C_n \cap \{\gamma^3,\gamma+1, \gamma^3+\gamma^2\})= 2^{n+1}-8-6\cdot \#(C_n \cap \{\gamma^3,\gamma+1, \gamma^3+\gamma^2\})$ pairs $(a,b)$ with $a^3=b^3\ne \gamma+1, \gamma^2(\gamma+1)$ when $n$ is even. Recall that we have $\nabla_f(a,b)=8$ when $a^3=b^3\ne \gamma+1, \gamma^2(\gamma+1)$, hence these pairs are also excluded when we count the number of pairs $(a,b)\in (S_{\gamma,7}\cup S_{\gamma,8})\setminus S$ with $\nabla_f(a,b)=4$. We also note that $\tr(\gamma)=\tr(1/\gamma)=0$ for $\gamma\in\F_8\setminus \F_2$, when $6\mid n$.

\bigskip
Therefore, we have
\begin{align*}
\omega_4=&\#((S_{\gamma,7}\cup S_{\gamma,8})\setminus S)-12\cdot \#(C_n \cap \{\gamma^3,\gamma+1, \gamma^3+\gamma^2\})\\
+&
\begin{cases}
2^{n+1}+22,&\text{ if } \tr(\gamma)=\tr(1/\gamma)=0,\\
2^{n+1}+10,&\text{ if }  \tr(\gamma+1/\gamma)=1,\\
2^{n+1}-2,&\text{ if }  \tr(\gamma)=\tr(1/\gamma)=1,
\end{cases}
\end{align*}
when $n$ is even, and 
\begin{equation*}
\omega_4=\#((S_{\gamma,7}\cup S_{\gamma,8})\setminus S)+
\begin{cases}
12,&\text{ if } \gamma\not \in \F_8, \\
0,&\text{ if } \gamma \in \F_8, \\
\end{cases}
\end{equation*}
when $n$ is odd, where 
$$S=\{(a,b) \in \Fbnmul \times \Fbnmul : \{1,\gamma\}\cap \{a,b,a+b\}\ne \emptyset\}.$$

In particular, when $n$ is odd, we have
\begin{equation}\label{Inv1r_DS_odd}
\begin{cases}
\omega_8=6,\ \ \omega_4=\#((S_{\gamma,7}\cup S_{\gamma,8})\setminus S), &\text{ if }3\mid n, \gamma \in \F_8\setminus \F_2,\\
\omega_8=0,\ \ \omega_4=\#((S_{\gamma,7}\cup S_{\gamma,8})\setminus S)+12, &\text{ otherwise.}
\end{cases}
\end{equation}
We desire to further investigate $\omega_4$, when $n$ is odd. We consider to compute the number of pairs $(a,b)$ such that 
$\frac b a \not\in \F_4, (a,b)\in S_{\gamma,7}\cup S_{\gamma,8}, \{a,b,a+b\}\cap\{1,\gamma\}=\emptyset.$
We rewrite \eqref{Inv1r_char2 x=1 eqn} as
\begin{equation}\label{Inv1r_char2 x=1 eqnb}
b^2(a+\gamma+1)+ab(a+\gamma+1)+(a+1)^2(\gamma+1)=0.
\end{equation}
By Lemma \ref{quad lemma}, there are two $b\in \Fbn$ satisfying \eqref{Inv1r_char2 x=1 eqnb} if and only if 
\begin{align*}
0&=\tr\left(\Frac{(a+1)^2(\gamma+1)}{a^2(a+\gamma+1)}\right) = \tr\left(\Frac{(a+1)^2}{a^2}\right) + \tr\left(\Frac{a(a+1)^2}{a^2(a+\gamma+1)}\right)\\
&=\tr\left(\Frac{a+1}{a}\right) + \tr\left(\Frac{a^2+1}{a(a+\gamma+1)}\right)=\tr\left(\Frac{(a+1)(a+\gamma+1)+a^2+1}{a(a+\gamma+1)}\right)\\
&=\tr\left(\Frac{\gamma(a+1)}{a(a+\gamma+1)}\right)=\tr\left(\Frac{\gamma}{a+\gamma+1}\right)+\tr\left(\Frac{\gamma}{a(a+\gamma+1)}\right).
\end{align*}
Hence, there are two $b\in \Fbn$ satisfying \eqref{Inv1r_char2 x=1 eqn} if and only if 
\begin{equation}\label{Sr7_trace}
\tr\left(\Frac{\gamma}{a+\gamma+1}\right)=\tr\left(\Frac{\gamma}{a(a+\gamma+1)}\right).
\end{equation}
Similarly, there are two $b\in \Fbn$ satisfying \eqref{Inv1r_char2 x=r eqn} if and only if 
\begin{equation}\label{Sr8_trace}
\tr\left(\Frac{1}{a+\gamma+1}\right)=\tr\left(\Frac{\gamma}{a(a+\gamma+1)}\right).
\end{equation}
If we set
$$S_{\gamma,9}=\left\{a\in \Fbnmul : \eqref{Sr7_trace}\text{ holds}, a\ne \gamma+1\right\}\ \text{ and } \ S_{\gamma,10}=\left\{a\in \Fbnmul : \eqref{Sr8_trace}\text{ holds}, a\ne \gamma+1\right\},$$
then for every $a\in \Fbn\setminus\{0,1,\gamma,\gamma+1\}$,
\begin{equation*}
\#\{b\in \Fbn : (a,b)\in S_{\gamma,7}\cup S_{\gamma,8}\}=
\begin{cases}
4, &\text{ if }a\in S_{\gamma,9}\cap S_{\gamma,10},\\
2, &\text{ if }a\in (S_{\gamma,9}\setminus S_{\gamma,10})\cup(S_{\gamma,10}\setminus S_{\gamma,9}),\\
0, &\text{ if }a\not\in S_{\gamma,9}\cup S_{\gamma,10}.
\end{cases}
\end{equation*}
Since the right hand sides of \eqref{Sr7_trace} and \eqref{Sr8_trace} are identical, $a\in (S_{\gamma,9}\setminus S_{\gamma,10})\cup (S_{\gamma,10}\setminus S_{\gamma,9}) $ is equivalent to $\tr\left(\frac{\gamma+1}{a+\gamma+1}\right)=1$, and then we have $2$ pairs $(a,b)$ in $S_{\gamma,7}\cup S_{\gamma,8}$ in this case. Since the field trace map is balanced, the number of $a\in \Fbnmul \setminus \{\gamma+1\}$ satisfying $\tr\left(\frac{\gamma+1}{a+\gamma+1}\right)=1$ is $2^{n-1}-1$, when $n$ is odd.

\begin{remark}\label{Sr78_conjecture}
If $\tr\left(\frac{\gamma+1}{a+\gamma+1}\right)=0$ or equivalently $\tr\left(\frac{a}{a+\gamma+1}\right)=1$, then $a\in S_{\gamma,9}\cap S_{\gamma,10}$ or $a\not \in S_{\gamma,9}\cup S_{\gamma,10}$ holds. By SageMath experiment, $\#(S_{\gamma,9}\cap S_{\gamma,10})=2^{n-2}-1$ for any $\gamma\in \Fbn\setminus\{0,1\}$, when $1<n\le 17$ is odd. We conjecture that this holds for every odd $n>1$. 
\end{remark}
From now on, we assume that the conjecture in Remark \ref{Sr78_conjecture} is true. Then we have 
\begin{align}
\#((S_{\gamma,7}\cup S_{\gamma,8})\setminus S)&= 4\cdot \#(S_{\gamma,9}\cap S_{\gamma,10}) +2\cdot  \#\left((S_{\gamma,9}\setminus S_{\gamma,10})\cup(S_{\gamma,10}\setminus S_{\gamma,9})\right) -  \#(S\cap (S_{\gamma,7}\cup S_{\gamma,8})) \notag\\
&=4(2^{n-2}-1)+2(2^{n-1}-1) -  \#(S\cap (S_{\gamma,7}\cup S_{\gamma,8}))\notag \\
&= 2^{n+1}-6-  \#(S\cap (S_{\gamma,7}\cup S_{\gamma,8})).\label{Sr78_num}
\end{align}
It remains to compute $\#(S\cap (S_{\gamma,7}\cup S_{\gamma,8}))$, and hence we consider to compute the number of pairs $(a,b)\in \Fbnmul \times \Fbnmul$ with $\{1,\gamma\} \cap \{a,b,a+b\}\ne \emptyset$ such that \eqref{Sr7_trace} or \eqref{Sr8_trace}. If $a=1$, then \eqref{Sr7_trace} holds. But, $a=1$ in \eqref{Inv1r_char2 x=1 eqn} implies that $b=0$ or $b=1$, a contradiction. And, $a=\gamma$ in \eqref{Sr7_trace} is equivalent to $\tr(\gamma)=\tr(1)=1$. Hence we have 4 pairs of $(1,b)\in (S_{\gamma,7}\cup S_{\gamma,8}))$ when $\tr(\gamma)=1$, and 2 pairs otherwise. If $a=\gamma$, then \eqref{Sr8_trace} holds. But, $a=\gamma$ in \eqref{Inv1r_char2 x=r eqn} implies that $b=0$ or $b=\gamma$, a contradiction. And, $a=1$ in \eqref{Sr8_trace} is equivalent to $\tr\left(\frac 1\gamma\right)=\tr(1)=1$. Hence we have 4 pairs of $(\gamma,b)\in (S_{\gamma,7}\cup S_{\gamma,8}))$ when $\tr\left(\frac 1\gamma\right)=1$, and 2 pairs otherwise. 

Applying $b=1$ in \eqref{Inv1r_char2 x=1 eqn} implies that $a=0$ or $a=1$, which are already considered in above. Applying $b=\gamma$ in \eqref{Inv1r_char2 x=1 eqn} implies that $a^2+\gamma a +(\gamma+1)^3=0$. By Lemma \ref{quad lemma}, $a^2+\gamma a +(\gamma+1)^3=0$ has two solutions if and only if $\tr\left(\frac{(\gamma+1)^3}{\gamma^2}\right)=\tr(\gamma+1)=0$ or equivalently $\tr(\gamma)=1$. Since \eqref{Inv1r_char2 x=1 eqn} and \eqref{Inv1r_char2 x=r eqn} have no common solution, \eqref{Sr7_trace} holds and \eqref{Sr8_trace} does not hold for each solution $a$ of $a^2+\gamma a +(\gamma+1)^3=0$. Since if $b=\gamma$ is a solution of \eqref{Inv1r_char2 x=1 eqn} then another solution of \eqref{Inv1r_char2 x=1 eqn} is $b=a+\gamma$, so we have 2 pairs of $(a,\gamma)\in S_{\gamma,7}\cup S_{\gamma,8}$ and 2 pairs of $(a,a+\gamma)\in S_{\gamma,7}\cup S_{\gamma,8}$, when $\tr(\gamma)=1$. Similarly, $b=\gamma$ in \eqref{Inv1r_char2 x=r eqn} implies already considered case in above, and $b=1$ in \eqref{Inv1r_char2 x=r eqn} implies $\gamma a^2+\gamma a+(\gamma+1)^3=0$, which has two solutions if and only if $\tr\left(\frac{(\gamma+1)^3}{\gamma}\right)=\tr\left(1+\frac1\gamma\right)=0$ or equivalently $\tr\left(\frac1\gamma\right)=1$. \eqref{Sr8_trace} holds and \eqref{Sr7_trace} does not hold for each solution $a$ of $\gamma a^2+\gamma a+(\gamma+1)^3=0$. Hence, we have 2 pairs of $(a,1)\in S_{\gamma,7}\cup S_{\gamma,8}$ and 2 pairs of $(a,a+1)\in S_{\gamma,7}\cup S_{\gamma,8}$, when $\tr\left(\frac1\gamma\right)=1$.
To summarize, we have
\begin{equation*}
\#(S\cap (S_{\gamma,7}\cup S_{\gamma,8})) = 
\begin{cases}
4\cdot 0 + 2\cdot 2 = 4,& \text{ if } \tr(\gamma)=\tr\left(\frac 1 \gamma\right)=0,\\
4\cdot 4 + 2\cdot 0 = 16,& \text{ if } \tr(\gamma)=\tr\left(\frac 1 \gamma\right)=1,\\
4\cdot 2 + 2\cdot 1 = 10,& \text{ if } \tr\left(\gamma+\frac 1 \gamma\right)=1.
\end{cases}
\end{equation*}

So, if $3\mid n$ and $\gamma\in \F_8 \setminus \F_2$, then we have $\nabla_f=8$ and
$$\omega_8=6,\ \  \omega_4=2^{n+1}-16,\ \  \omega_0=(2^n-2)(2^n-1)-\omega_4-\omega_8 = 2^{2n}-5\cdot 2^n+12.$$
applying \eqref{F8_trace_r} on \eqref{Sr78_num}. Otherwise, we have $\nabla_f=4$, and applying \eqref{Inv1r_DS_odd} and \eqref{Sr78_num} we have
\begin{equation*}
\omega_4 = 2^{n+1}+6-  \#(S\cap (S_{\gamma,7}\cup S_{\gamma,8}))=
\begin{cases}
2^{n+1}+2,& \text{ if } \tr(\gamma)=\tr\left(\frac 1 \gamma\right)=0,\\
2^{n+1}-10,& \text{ if } \tr(\gamma)=\tr\left(\frac 1 \gamma\right)=1,\\
2^{n+1}-4,& \text{ if } \tr\left(\gamma+\frac 1 \gamma\right)=1,
\end{cases}
\end{equation*}
\begin{equation*}
\omega_0  = (2^n-1)(2^n-2) - \omega_4 = 
\begin{cases}
2^{2n}-5\cdot 2^n,& \text{ if } \tr(\gamma)=\tr\left(\frac 1 \gamma\right)=0,\\
2^{2n}-5\cdot 2^n+12,& \text{ if } \tr(\gamma)=\tr\left(\frac 1 \gamma\right)=1,\\
2^{2n}-5\cdot 2^n+6,& \text{ if } \tr\left(\gamma+\frac 1 \gamma\right)=1.
\end{cases}
\end{equation*}
We confirm that the above holds for all $\gamma\in \Fbn\setminus \F_2$ via SageMath experiments, when $1<n\le 11$ is odd.

\subsection{Some experimental results on the second-order zero differential uniformity in odd characteristic}

In this subsection, we consider the second-order zero differential uniformity of $f=Inv\circ (1,\gamma)$ when $p$ is odd, by using similar idea with Section \ref{subsec_inv01_FBU}. However, unlike in Section \ref{subsec_inv01_FBU} where $\gamma=0$ and $p=2$, $\gamma$ is not fixed and $p$ is an arbitrary odd prime in this section, so we have to deal with a much wider range of cases, which can be a very tedious process. However, since the inverse function has low second-order zero differential uniformity and $Inv \circ (1,\gamma)$ differs from the inverse function by only two points, we can conjecture that  $Inv \circ (1,\gamma)$ also has the low second-order zero differential uniformity. We introduce several experimental results using SageMath. When $p^n <1000$ with $p>3$ we have the followings 
\begin{itemize}
\item $\nabla_f=7$ when $p^n=11^2$ and $\gamma=-1$,
\item $\nabla_f=6$ when $p=5$, $\gamma^4=-1$ or $p=97$
\item otherwise, $2\le \nabla_{f}\le 5$.
\end{itemize}

On the other hand, when $p=3$, we have $\nabla_f \in \{3,6,9\}$. These results can be partially explained by investigating $\nabla_f(a,a)$ for $a\in \F_{3^n}^*$. Let $b=a$. Then \eqref{FBU_eqn} is equivalent to
\begin{equation}\label{p3_ab0_eqn}
f(x)+f(x+a)+f(x-a)=0
\end{equation}
When $a=1$,
\begin{itemize}
\item If $x\in \F_3$, then \eqref{p3_ab0_eqn} is equivalent to $\gamma=1$, a contradiction.
\item If $x\in \{\gamma, \gamma\pm1\}$, then \eqref{p3_ab0_eqn} is equivalent to $\gamma^2-\gamma=1$, such $\gamma$ exists only if $n$ is even.
\item If $x\not \in \{0,\pm1, \gamma, \gamma\pm1\}$, then \eqref{p3_ab0_eqn} is equivalent to $-a^2=0$, a contradiction. 
\end{itemize}
Therefore, $\nabla_f(1,1)=3$ when $\gamma^2-\gamma=1$, and  $\nabla_f(1, 1)=0$ otherwise.

When $a=\gamma$, 
\begin{itemize}
\item If $x\in \{1,1\pm \gamma\}$, then \eqref{p3_ab0_eqn} is equivalent to $\gamma^2+\gamma=1$, such $\gamma$ exists only if $n$ is even.
\item If $x\in \{0,\pm\gamma\}$, then \eqref{p3_ab0_eqn} is equivalent to $\gamma=1$, a contradiction.
\item If $x\not \in \{0,1, \pm\gamma, 1\pm\gamma\}$, then \eqref{p3_ab0_eqn} is equivalent to $-\gamma^2=0$, a contradiction. 
\end{itemize}
Therefore, $\nabla_f(\gamma, \gamma)=3$ when $\gamma^2+\gamma=1$, and  $\nabla_f(\gamma, \gamma)=0$ otherwise.

If $a\ne 1,\gamma$, then we have
\begin{itemize}
\item $x=0,\pm a$ are solutions of \eqref{p3_ab0_eqn}.
\item $x=1, \pm a+1$ are solutions of \eqref{p3_ab0_eqn} if $\gamma=1-a^2$.
\item $x=\gamma, \pm a+\gamma$ are solutions of \eqref{p3_ab0_eqn} if $\gamma(\gamma-1)=a^2$.
\item $x\not\in\{0,1,\gamma, \pm a, 1\pm a, \gamma \pm a\}$, then \eqref{p3_ab0_eqn} is equivalent to $-1=0$, a contradiction. 
\end{itemize}
We add $\gamma=1-a^2$ and $\gamma(\gamma-1)=a^2$, then we obtain $\gamma^2=1$. Since $\gamma\ne 1$, we have $\gamma=-1$, and $a^2=b^2=-1$. For $a\in \F_{3^n}$ with $a^2=-1$ to exist, $n$ need to be even. Therefore, we have $\nabla_f(a,  a)=9$, if $2\mid n, \gamma=-1, a^2=-1$.

Furthermore, we have $\nabla_f(a,a)= 6$, if $a^2=1-\gamma$ or $a^2=\gamma(\gamma-1)$, and $1-\gamma, \gamma(\gamma-1)\not\in\{1,\gamma^2\}$. By the way, $1-\gamma, \gamma(\gamma-1)\in\{1,\gamma^2\}$ implies that $\gamma^2\pm\gamma=1$, since $\gamma \ne 0$. 
If $\gamma^2+\gamma=1 \Leftrightarrow 1-\gamma=\gamma^2=a^2$ or $\gamma^2-\gamma=1=a^2$, then we already have $\nabla_f(a,a)=3$ in these cases.
Therefore, we have $\nabla_f (a,a)= 6$ if $a\in \{1-\gamma, \gamma(\gamma-1)\}\cap Q_n\ne \emptyset$ and $\gamma^2\pm \gamma\ne 1$, where 
$$Q_n=\{a^2\in \F_{3^n} : a\in \F_{3^n}\}\setminus \F_3$$
is the set of all squares in $\F_{3^n}\setminus \F_3$. To summarize, we have
\begin{equation*}
\nabla_f (a,a)= 
\begin{cases}
9, &\text { if }2\mid n, \gamma=-1, a^2=-1\\
6, &\text{ if }a^2\in \{1-\gamma, \gamma(\gamma-1)\}\cap Q_n \text{ and }\gamma^2\pm \gamma\ne 1,\\
3, &\text { otherwise.}
\end{cases}
\end{equation*}
Furthermore, note that we have the following :
\begin{equation*}
\nabla_f = 
\begin{cases}
9, &\text { if }2\mid n, \gamma=-1,\\
6, &\text{ if }\{1-\gamma, \gamma(\gamma-1)\}\cap Q_n\ne \emptyset \text{ and }\gamma^2\pm \gamma\ne 1,\\
3, &\text { otherwise,}
\end{cases}
\end{equation*}
from SageMath experiments for $2\le n \le 8$.  We conjecture that the above holds for every $n\ge 2$.

\section{Conclusion}\label{sec_con}

In this paper, we studied the second-order zero differential uniformity of the swapped inverse function $Inv \circ (1,\gamma)$ where $\gamma \in \Fpn\setminus \{1\}$. Specifically, 
\begin{itemize}
\item We studied the FBCT of $Inv \circ (0,1)$ and completely computed the second-order zero differential spectrum of $Inv \circ (0,1)$ defined on $\Fbn$.
\item We showed that the second-order zero differential uniformity of $Inv \circ (0,1)$ defined on $\Fpn$ is at most $4$, when $p$ is an odd prime.
\item We studied the FBCT of $Inv\circ (1,\gamma)$ defined on $\Fbn$, where $\gamma\in\Fbn\setminus \{0,1\}$.  Moreover, we further studied the second-order zero differential spectrum of $Inv \circ (1,\gamma)$, especially when $n$ is odd.
\item Due to the extensive range of cases, rather than conducting a detailed investigation into the second-order zero differential uniformity of $Inv \circ (1,\gamma)$ when $p$ is odd, we presented several experimental results for some cases of $\gamma$ and $p$. Furthermore, from experimental results, we suggested a conjecture on the second-order zero differential uniformity of $Inv \circ (1,\gamma)$ when $p=3$.
\end{itemize}
In summary, the swapped inverse functions, known to have good cryptographic properties, also have low second-order zero differential uniformity. In particular, our paper is the first to characterize classes of non-power functions with the second-order zero differential uniformity equal to $4$ over binary fields.

Finding the second-order zero differential uniformity of functions with good cryptographic properties has not been widely studied so far. Also, when restricted to non-power functions, the results are very limited. Therefore, we believe that this topic is worth further study, and we plan to explore the second-order zero differential uniformity of other non-power functions.

\bigskip

\noindent \textbf{Acknowledgements} :
This work was supported by the National Research Foundation of Korea (NRF) grant
funded by the Korea government (MSIT) (No. 2021R1C1C2003888). Soonhak Kwon was supported by the National Research Foundation of Korea (NRF) grant funded by the Korea government (MSIT) (No. 2016R1A5A1008055, No. 2019R1F1A1058920 and 2021R1F1A1050721).

\end{document}